\newtheorem{theorem}{Theorem}[section]
\newtheorem{corollary}[theorem]{Corollary}
\newtheorem{lemma}[theorem]{Lemma}
\newtheorem{observation}[theorem]{Observation}
\newcommand{\bd}{\ensuremath{\partial G} }                 
\newcommand{\cvisible}{color-visible\xspace}
\newcommand{\cvisibility}{color-visibility\xspace}
\begin{document}

\title{Bichromatic Compatible Matchings}

\author{Greg Aloupis\thanks{D\'epartement d'Informatique, Universit\'e Libre de Bruxelles, Brussels, Belgium {\tt aloupis.greg@gmail.com,} \tt{\{lbarbafl,slanger\}@ulb.ac.be}}\ \thanks{Charg\'e de recherches du F.R.S.-FNRS.} \and Luis Barba$^*$\thanks{School of Computer Science, Carleton University, Ottawa, Canada}\ \thanks{Boursier FRIA du FNRS} \and Stefan Langerman$^*$\thanks{Directeur de recherches du F.R.S.-FNRS.} \and Diane L. Souvaine\thanks{NSF Grant \#CCF-0830734., Department of Computer Science, Tufts University, Medford, MA.
{\tt dls@cs.tufts.edu}}
}
\date{}

\maketitle
\begin{abstract}
For a set $R$ of $n$ red points and a set $B$ of
$n$ blue points, a \emph{$BR$-matching} is a
non-crossing geometric perfect matching where each segment has one endpoint in $B$ and one in $R$. Two 
$BR$-matchings are compatible if their union is also
non-crossing. We prove that, for any two distinct $BR$-matchings $M$ and $M'$, there exists a sequence
of $BR$-matchings $M = M_1 , \ldots,
M_k = M'$  such that $M_{i-1} $ is compatible with $M_i$.  
This implies the connectivity of the \emph{compatible bichromatic matching graph} containing one node for each  $BR$-matching and an edge joining each pair of compatible $BR$-matchings, thereby
answering the open problem posed by Aichholzer et al. in~\cite{CompatibleMatchingsForPSLG}.
\end{abstract}


\section{Introduction}

A planar straight line graph (PSLG) is a geometric graph in
which the vertices are points embedded in the plane and the edges are non-crossing line segments. 
There are many special types of PSLGs
of which we name a few.  A triangulation is a PSLG to
which no more edges may be added between existing vertices.
A \emph{geometric matching} of a given point set $P$ is a 1-regular PSLG consisting of pairwise
disjoint line segments in the plane joining points of $P$.  
A geometric matching is \emph{perfect} if every point in $P$ belongs to exactly one segment.

Two branches of study on PSLGs include those of geometric
augmentation and geometric reconfiguration.  
A typical augmentation
problem on PSLG $G = (V,E)$ asks
for a set of new edges $E'$ such that the graph $(V,E
\cup E')$ retains or gains some desired properties (see survey by Hurtado and T\'oth~\cite{Hurtado2012}).  

A typical reconfiguration problem on a pair of PSLGs
$G$ and $G'$ sharing some property asks for a sequence of PSLGs
$G = G_0, \ldots, G_k = G'$  where each successive pair of PSLGs  $G_{i-1}$,
$G_i$ jointly satisfy some geometric constraints.
In some situations, a bound on the value of $k$ is desired as well~\cite{Aichholzer200619, Aichholzer20023, Aichholzer2009617, Buchin2009, Hurtado1999, Hurtado1996, Razen2008}.
  
One such solved problem is that of reconfiguring triangulations: given two triangulations $T$ and $T'$, one can compute a sequence of triangulations $T=T_0 ,
\ldots, T_k = T'$ on the same point set such that $T_{i-1}$ can be reconfigured to $T_i$  by flipping one edge.
   Furthermore, bounds on the value of $k$ are known:
$O(n^2)$ edge flips are always sufficient~\cite{Hurtado1996} and
$\Omega (n^2)$ edge flips are sometimes necessary~\cite{Hurtado1999}.


Two PSLGs on the same vertex set are {\it compatible} if their union is planar.
Compatible geometric matchings have been the object of study in both augmentation and reconfiguration problems.  For example, the {\it
Disjoint Compatible Matching Conjecture}~\cite{Aichholzer2009617} was
recently solved in the affirmative~\cite{Ishaque2011}: every perfect planar matching $M$ of $2n$ segments on $4n$
points can be augmented by $2n$ additional segments to form a PLSG
that is the union of simple polygons.


Let $M$ and $M'$ be two perfect planar matchings of a given point set. The reconfiguration problem asks for a \emph{compatible sequence} of matchings $M = M_0 , \ldots, M_k = M'$ such that $M_{i-1}$ is compatible with $M_i$ for all $i \in \{1, \ldots,k\}$.
Aichholzer et al.~\cite{Aichholzer2009617} proved that there is always a compatible sequence of $O(\log n)$ matchings that reconfigures any given matching into a canonical matching. Thus, the
 \emph{compatible matching graph}, that has one node for each perfect planar matching and an edge between any two compatible matchings, is connected with diameter
  $O(\log n)$.
Razen~\cite{Razen2008} proved that the distance between two nodes in this graph is sometimes $\Omega(\log n/ \log \log n)$.


A natural question to extend this research is to ask what happens with bichromatic point sets in which the segments must join points from different colors. 
Let $P= B\cup R$ be a set of points in the plane in general position where $|R|=|B| = n$.  
A straight-line segment with one endpoint in $B$ and one in $R$ is called a \emph{bichromatic segment}.  
A perfect planar matching of $P$ where every segment is bichromatic is called a \emph{$BR$-matching}. 
Sharir and Welzl~\cite{Sharir2006} proved that the number of $BR$-matchings of $P$ is at most $O(7.61^n)$.
Hurtado et al.~\cite{Hurtado200814} showed that any $BR$-matching can be augmented to a crossing-free bichromatic spanning tree in $O(n \log n)$ time.
Aichholzer et al.~\cite{CompatibleMatchingsForPSLG} proved that for any $BR$-matching $M$ of $P$, there are at least $\lceil\frac{n-1}{2}\rceil$ bichromatic segments spanned by $P$ that are compatible with $M$. Furthermore, there are $BR$-matchings with at most $3n/4$ compatible bichromatic segments.

At least one $BR$-matching can always be produced by recursively applying \emph{ham-sandwich cuts}; see Fig.~\ref{fig:CanonicalMatching} for an illustration. A $BR$-matching produced in this way is called a \emph{ham-sandwich matching}.
Notice that the general position assumption is sometimes necessary to guarantee the existence of a $BR$-matching.
However, not all $BR$-matchings can be produced using ham-sandwich cuts. 
Furthermore, some point sets admit only one $BR$-matching, which must be a ham-sandwich matching.

Two $BR$-matchings $M$ and $M'$ are \emph{connected} if there is a sequence of $BR$-matchings $M = M_0, \ldots, M_k = M'$, such that $M_{i-1}$ is compatible with $M_{i}$, for $1\leq i\leq k$.
An open problem posed by Aichholzer et al.~\cite{CompatibleMatchingsForPSLG} was to prove that all $BR$-matchings of a given point set are connected\footnote{This problem was also posed during the EuroGIGA meeting that took place after EuroCG 2012.}.  We answer this in the affirmative by using a ham-sandwich matching $H$ as a canonical form. 
Consider the first ham-sandwich cut line $\ell$ used to construct $H$. We show how to reconfigure any given $BR$-matching via a compatible sequence, so that the last matching in the sequence contains no segment crossing $\ell$.
We use this result recursively, on every ham-sandwich cut used to generate $H$, to show that any given $BR$-matching is connected with $H$.

\section{Ham-sandwich matchings}\label{Section:Ham-Sandwich}
In this paper, a \emph{ham-sandwich cut} of $P$ is a line passing through no point of $P$ and containing exactly $\lfloor \frac{n}{2}\rfloor$ blue and $\lfloor \frac{n}{2}\rfloor$ red points to one side.
Notice that if $n$ is even, then this matches the \emph{classical} definition of ham-sandwich cuts (see Chapter 3 of ~\cite{MatousekBorsukUlam}). However, when $n$ is odd, a ham-sandwich cut $\ell$ according the classical definition will go through a red and a blue point of $P$. In this case, we obtain a ham-sandwich cut, according to our definition, by slightly moving $\ell$ away from these two points without changing its slope and without reaching another point of $P$. By the general position assumption this is always possible.

Since every bichromatic point set admits a ham-sandwich cut, $P$ admits at least one $BR$-matching resulting from recursively applying ham-sandwich cuts. We call this a \emph{ham-sandwich matching}; see Fig.~\ref{fig:CanonicalMatching}. Notice that $P$ may admit several ham-sandwich matchings.

\begin{figure}[tb]
\centering
\includegraphics{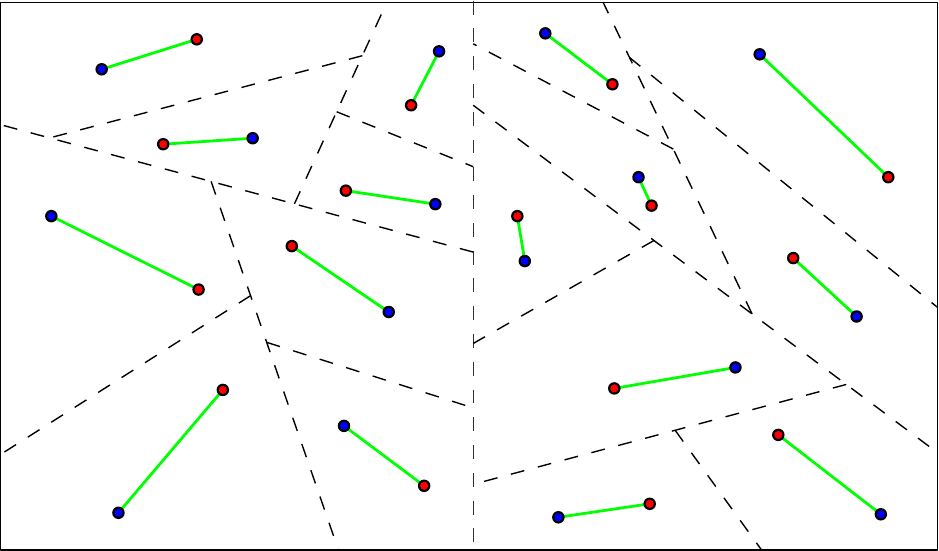}
\caption{A ham-sandwich matching obtained by recursively applying ham-sandwich cuts.}
\label{fig:CanonicalMatching}
\end{figure}

Let $M$ be a $BR$-matching of $P$.
In this section we prove that $M$ is connected with a ham-sandwich matching $H$ of $P$.
Consider a ham-sandwich cut $\ell$ used to construct $H$. The idea of the proof is to show the existence of a $BR$-matching $M'$, compatible with $M$, such that $M'$ has ``fewer'' crossings with $\ell$ according to some measure defined on $BR$-matchings. 
By repeatedly applying this result, we end up with a matching $M^\ell$, connected with $M$, such that no segment of $M^\ell$ crosses $\ell$.
Once we know how to avoid a ham-sandwich cut, we can apply the same result recursively on every ham-sandwich cut used to generate $H$. In this way, we obtain a sequence of compatible $BR$-matchings that connect $M$ with $H$.

The main ingredient to obtain these results is Lemma~\ref{lemma:CompatibleBRMatching}.
Before stating this lemma, we need a few more definitions.

Given a line $\ell$ that contains no point of $P$, let $S_{M,\ell}$ be the set of segments of $M$ that cross $\ell$.
We say that $\ell$ is a  \emph{chromatic cut} of $M$ if $|S_{M,\ell}| \geq 2$ and not all endpoints of $S_{M,\ell}$ on one side of $\ell$ have the same color. Without loss of generality, we can assume that if a chromatic cut $\ell$ exists, then it is vertical and no segment of $M$ is parallel to $\ell$. 
The following observation shows the relation existing between chromatic and ham-sandwich cuts.

\begin{lemma}\label{lemma:HamCutsAreChromaticCuts}
Given any  $BR$-matching $M$ of $P$, every ham-sandwich cut of $P$ either crosses no segment of $M$, or is a chromatic cut of $M$.
\end{lemma}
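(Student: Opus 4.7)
The plan is to argue by a direct counting argument based on color balance. The defining property of a ham-sandwich cut of $P$ is that each side of $\ell$ contains the same number of red and blue points: one side has exactly $\lfloor n/2 \rfloor$ points of each color, so the other has exactly $\lceil n/2 \rceil$ of each. I will exploit this equality to show that the set $S_{M,\ell}$ of segments crossing $\ell$ is color-balanced on both sides of $\ell$.

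Fix a ham-sandwich cut $\ell$ and let $k = |S_{M,\ell}|$. Every segment of $M$ that does not cross $\ell$ has both endpoints on the same side of $\ell$, and being bichromatic it contributes one red and one blue endpoint to that side. For the $k$ segments in $S_{M,\ell}$, each contributes exactly one endpoint to each side, either a blue one on the left and a red one on the right, or the reverse. Let $a$ denote the number of crossing segments whose blue endpoint lies to the left of $\ell$, and let $b$ denote the number whose red endpoint lies to the left. Then $a + b = k$, and the total number of blue (resp.\ red) points to the left of $\ell$ equals $a$ (resp.\ $b$) plus the number of non-crossing segments contained in the left halfplane. Since the left side has equal numbers of red and blue points, subtracting gives $a = b$.

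The two conclusions defining a chromatic cut now follow immediately. First, $k = a + b = 2a$ is even, so if $\ell$ crosses at least one segment of $M$ then in fact $k \geq 2$. Second, in that case $a = b \geq 1$, so the endpoints of $S_{M,\ell}$ on the left side include at least one blue and at least one red point; by symmetry the same holds on the right side. Thus not all endpoints of $S_{M,\ell}$ on one side of $\ell$ share a color, and $\ell$ is a chromatic cut of $M$.

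There is no real obstacle here: the only subtlety is handling the parity of $n$, which is absorbed into the observation that the ham-sandwich cut by definition leaves each color equally represented on each side, regardless of whether $n$ is even or odd.
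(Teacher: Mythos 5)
Your proof is correct and follows essentially the same route as the paper's: both rest on the observation that non-crossing bichromatic segments contribute one point of each color to their side, so the red/blue balance forced by the ham-sandwich cut must be restored by the crossing segments having endpoints of both colors on each side. Your version is merely a more explicit bookkeeping of the same argument (and has the small merit of spelling out why $|S_{M,\ell}| \geq 2$, which the paper leaves implicit).
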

\begin{proof}
Let $\ell$ be a ham-sandwich cut of $M$. Recall that the number of blue and red points to the right of $\ell$ must be the same. Moreover, every segment that is not crossed by $\ell$ has both of its endpoints on one of its sides.
Therefore, if $\ell$ crosses a segment of $M$ having a red endpoint to the right of $\ell$, then, to maintain the balance of red and blue points, $\ell$ must cross another segment having a blue endpoint to the right of $\ell$. That~is, $\ell$ is a chromatic cut of~$M$.
\end{proof}

We say that a point lies above (\emph{resp.} below) a segment if it lies above (\emph{resp.} below) the line extending that segment. We proceed to state Lemma~\ref{lemma:CompatibleBRMatching}. However, its proof is deferred to Section~\ref{Section:Avoiding Chromatic Cuts} for ease of readability.

\begin{lemma}\label{lemma:CompatibleBRMatching}
Let $M$ be a $BR$-matching of $P$ and let $\ell$ be a chromatic cut of $M$.
There exists a $BR$-matching $M'$ of $P$, compatible with $M$, with the following properties. 
There is a segment $s$ in $M\setminus M'$ that crosses $\ell$ such that all segments of $M$ that cross $\ell$ below $s$ belong also to $M'$. Moreover, these are the only segments of $M'$ crossing $\ell$ below $s$.
\end{lemma}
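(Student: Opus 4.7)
The plan is to construct $M'$ by performing a single local swap that removes two adjacent crossings of $\ell$ and replaces them by two bichromatic segments, one lying entirely on each side of $\ell$. Order the segments in $S_{M,\ell}$ as $s_1,\ldots,s_k$ by the $y$-coordinate of their intersection with $\ell$, from bottom to top, and let $c_i\in\{R,B\}$ denote the color of the right endpoint of $s_i$. Since $\ell$ is chromatic, not all $c_i$ agree, so there is a smallest index $j$ with $c_j\neq c_{j+1}$; by reflecting through $\ell$ if necessary, I can assume $c_j=R$ and $c_{j+1}=B$. Write $s_j=br$ with $b\in B$ on the left of $\ell$ and $r\in R$ on the right, and $s_{j+1}=r'b'$ with $r'\in R$ on the left and $b'\in B$ on the right. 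Set $s:=s_j$.

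The candidate $M'$ is obtained from $M$ by deleting $s_j,s_{j+1}$ and inserting the two bichromatic segments $e_L:=br'$ (entirely on the left of $\ell$) and $e_R:=rb'$ (entirely on the right of $\ell$), keeping every other segment of $M$. Since neither $e_L$ nor $e_R$ crosses $\ell$, the segments of $M'$ that cross $\ell$ form the set $\{s_1,\ldots,s_{j-1},s_{j+2},\ldots,s_k\}$, whose members lying below $s\cap\ell$ are precisely $s_1,\ldots,s_{j-1}$, as demanded by the statement. Moreover, $e_L$ and $e_R$ lie in opposite open half-planes of $\ell$, so they do not cross each other, and each of them shares only an endpoint with $s_j$ and with $s_{j+1}$, hence does not meet either of them in the interior.

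The main technical obstacle is to show that $e_L$ and $e_R$ do not cross any other segment of $M$. The key enabling fact is that because $s_j$ and $s_{j+1}$ are consecutive in $S_{M,\ell}$, any segment of $M$ with one endpoint strictly between $s_j$ and $s_{j+1}$ on a given side of $\ell$ must have its other endpoint in the same strip: it cannot cross $\ell$ (no other element of $S_{M,\ell}$ lies between $s_j$ and $s_{j+1}$ along $\ell$), and it cannot cross $s_j$ or $s_{j+1}$ by planarity of $M$. I would then argue, via a short visibility/convex-position analysis inside this strip, that the straight segment $e_R$, and symmetrically $e_L$, fits without crossing any such obstructing $M$-segment; the minimality of $j$ (which forces $s_1,\ldots,s_j$ to have the same right-endpoint color) should be exactly what prevents bad configurations.

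If this direct swap fails in some degenerate configuration, the natural backup is to enlarge the local modification to include a few more segments of $M$ lying inside the strip between $s_j$ and $s_{j+1}$, using an inductive rearrangement within that strip that preserves everything below $s$ while restoring planarity. In either case, the conditions on segments crossing $\ell$ below $s$ follow immediately from the construction, and all the remaining effort is concentrated in the local planarity check on the two sides of $\ell$.
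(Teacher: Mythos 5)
There is a genuine gap, and it sits exactly where you placed the ``main technical obstacle'': nothing in your argument shows that $e_L=br'$ and $e_R=rb'$ avoid the other segments of $M$, and in general they do not. The obstructions come from segments of $M$ that do not cross $\ell$ at all: one can place several segments of $M$ entirely to the right of $\ell$, nested between the right-hand portions of $s_j$ and $s_{j+1}$, so that they separate $r$ from $b'$; then the straight segment $rb'$ must cross some of them, and $M'$ is neither planar nor compatible with $M$. The minimality of $j$ cannot help here, since the condition $c_j\neq c_{j+1}$ constrains only the colors of the \emph{crossing} segments and says nothing about the one-sided segments that block $rb'$. Your confinement claim is also false as stated: the ``strip'' between $s_j$ and $s_{j+1}$ is not a closed region (the two segments are finite and do not, together with $\ell$, bound anything), so a segment with one endpoint between them can escape around an endpoint of $s_j$ or $s_{j+1}$ without crossing $\ell$, $s_j$, or $s_{j+1}$.

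The fallback you sketch --- an ``inductive rearrangement within the strip'' --- is not a degenerate-case patch; it is the entire content of the problem, and it is what the paper's Sections~\ref{Section:Tools} and~\ref{Section:Avoiding Chromatic Cuts} are built to handle (the Glue-Cut Graphs, the \textsc{Augment} procedure, and the re-matching of reflex vertices via Lemma~\ref{lemma:AbellanasAlgorithm}). Note also that your plan attempts something strictly stronger than the lemma: you try to preserve every segment of $M$ except $s_j$ and $s_{j+1}$, whereas the lemma (deliberately) gives up all control over what happens above $s$, precisely because a two-for-two swap is not achievable in general. A correct proof must be prepared to re-match a large, color-balanced set of points lying above $s_j$; the paper does this by making those points reflex vertices of a well-colored PSLG whose faces each contain a balanced, alternating set of reflex vertices, and then invoking the reflex-vertex matching lemma. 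As written, your proposal does not contain the idea needed to bridge this.
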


In other words, Lemma~\ref{lemma:CompatibleBRMatching} states that we can find a $BR$-matching $M'$, compatible with $M$, such that a segment $s$ from $M$ that crosses $\ell$ does not appear in $M'$. Moreover, every segment of $M$ that crosses $\ell$ below $s$ is preserved in $M'$, and no new segment crossing $\ell$ below $s$ is introduced. However, we have no control over what happens above $s$.

\begin{lemma}\label{lemma:CompatibleSequence}
Given a $BR$-matching $M$ of $P$ and a ham-sandwich cut $\ell$, 
there is a $BR$-matching $M^\ell$ connected with $M$ such that no segment of $M^\ell$ crosses $\ell$.
\end{lemma}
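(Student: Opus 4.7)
The plan is to iteratively apply Lemma~\ref{lemma:CompatibleBRMatching} starting from $M$, producing a compatible sequence whose last matching has no segment crossing $\ell$; the main tool to force termination is a lexicographic potential on the heights of crossings with $\ell$. Throughout I follow the convention that $\ell$ is vertical, and for a $BR$-matching $N$ of $P$ I let $Y(N)$ denote the tuple of $y$-coordinates of the crossings of segments of $N$ with $\ell$, sorted in increasing order and padded with $+\infty$. Tuples $Y(\cdot)$ are compared lexicographically.

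The first ingredient is that whenever any segment of a $BR$-matching $N$ of $P$ crosses $\ell$, the line $\ell$ is a chromatic cut of $N$, so Lemma~\ref{lemma:CompatibleBRMatching} applies. Since $\ell$ is a ham-sandwich cut, each side holds the same number of red and blue points; counting shows that the crossing segments split into two equally-sized classes, those with a red left endpoint and those with a blue left endpoint (the internal matching on each side pairs reds with blues, so the residuals on each side have equal numbers of each color). Thus one crossing forces a second of the opposite color type on either side of $\ell$.

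With this in hand I iterate. Given the current matching $M_i$, if $Y(M_i)$ consists only of $+\infty$ entries, I set $M^\ell := M_i$ and stop. Otherwise Lemma~\ref{lemma:CompatibleBRMatching} yields a compatible $M_{i+1}$ and a segment $s \in M_i \setminus M_{i+1}$ crossing $\ell$ at some height $h_s$, with all crossings of $M_{i+1}$ strictly below $h_s$ coinciding with those of $M_i$. No segment of $M_{i+1}$ can cross $\ell$ at exactly $h_s$: since $\ell$ passes through no point of $P$, such a segment would meet $s$ at a shared interior point, contradicting the compatibility of $M_i$ and $M_{i+1}$. Hence $Y(M_{i+1})$ agrees with $Y(M_i)$ on all entries below $h_s$, but at the next entry $Y(M_i)$ has the value $h_s$ while $Y(M_{i+1})$ has a strictly larger value; therefore $Y(M_{i+1}) >_{\text{lex}} Y(M_i)$. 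Because $P$ admits only finitely many $BR$-matchings, this strictly increasing chain must terminate, and by the first ingredient this can happen only when the current matching has no segment crossing $\ell$.

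The main obstacle is that Lemma~\ref{lemma:CompatibleBRMatching} offers no control over crossings above $s$, so the raw number of segments crossing $\ell$ need not decrease after a single swap. The lex-order potential on heights sidesteps this precisely because it measures progress from the bottom upward: the preservation of crossings strictly below $h_s$ combined with the removal of $s$ itself is enough to force $Y$ to strictly increase, even when new crossings appear higher up.
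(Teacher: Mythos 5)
Your proposal is correct and follows essentially the same route as the paper: iterate Lemma~\ref{lemma:CompatibleBRMatching}, justify applicability via the red--blue balance argument of Lemma~\ref{lemma:HamCutsAreChromaticCuts}, and terminate via a bottom-up monovariant on the crossings with $\ell$. The only difference is cosmetic --- the paper encodes the potential as a decreasing binary number indexed by all candidate crossing segments sorted along $\ell$, while you use the lexicographically increasing sorted tuple of crossing heights --- and both decrease/increase for the identical reason.
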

\begin{proof}
Assume that $\ell$ is a chromatic cut of $M$, i.e., that it crosses at least one segment of $M$. Otherwise, the result follows trivially.
Given a $BR$-matching $W$ of $P$ such that $\ell$ is a chromatic cut of $W$, let $\textsc{Next}(W)$ be the matching, compatible with $W$, that exists as a consequence of Lemma~\ref{lemma:CompatibleBRMatching} when applied on $W$.

Let $M_0 = M$. If $S_{M_i, \ell} \neq \emptyset$, i.e., there are segments of $M_i$ that crosses~$\ell$,
then let $M_{i+1} = \textsc{Next}(M_i)$. We claim that the sequence $\varphi = (M_0, M_1,\ldots, M_h)$ is finite and hence, that $M^\ell := M_h$ has no segment that crosses $\ell$.
Note that the sequence $\varphi$ is well defined by Lemma~\ref{lemma:HamCutsAreChromaticCuts}.

Assume without loss of generality that $\ell$ is a vertical line.
Let $\mathcal C_P = \{z_1, z_2, \ldots, z_m\}$ be the set of all possible $O(n^2)$ bichromatic segments that cross $\ell$. Assume that the segments of $\mathcal C_P$ are sorted, from bottom to top, according to their intersection with $\ell$.
Given a $BR$-matching $W$ of $P$, let $\chi_{_W} = b_1b_2\ldots b_m$ be a binary number where $b_i$ is defined as follows:
$$b_i = \left\{
\begin{array}{lll}
1&&\text{If $z_i$ belongs to $W$}\\
0&&\text{Otherwise}
\end{array}\right.$$

Let $M_i$ and $M_{i+1}$ be two consecutive matchings in $\varphi$.
By Lemma~\ref{lemma:CompatibleBRMatching}, there is a segment $s$, corresponding to some segment $z_k$ in $\mathcal C_P$, such that $s= z_k$ belongs to $M_i$ but not to $M_{i+1}$. 
Moreover, if $z_j$ is a segment that crosses $\ell$ below $z_k$, then $z_j$ belongs to $M_i$ if and only if $z_j$ belongs to $M_{i+1}$.
Therefore, the $k$-th digit of $\chi_{_{M_i}}$ is 1 while the $k$-th digit of $\chi_{_{M_{i+1}}}$ is 0. 
Moreover, for every $j< k$, the $j$-th digit of $\chi_{_{M_i}}$ is identical to the $j$-th digit of $\chi_{_{M_{i+1}}}$.
This implies that $\chi_{_{M_i}} > \chi_{_{M_{i+1}}}$. Therefore, $\Phi =\chi_{_{M_0}}, \chi_{_{M_1}}, \ldots, \chi_{_{M_h}}$ is a strictly decreasing sequence.
This means that no $BR$-matching is repeated and that $\Phi$ converges to zero yielding our claim, i.e., we reach a $BR$-matching containing no segment that crosses $\ell$.
\end{proof}

\begin{theorem}\label{theorem:The graph is connected}
Let $P= B\cup R$ be a set of points in the plane in general position where $|R|=|B|$. 
If $M$ is a $BR$-matching of $P$ and $H$ is a ham-sandwich matching of $P$, then $M$ and $H$ are connected, i.e., there is a sequence of $BR$-matchings $M = M_0, \ldots, M_r = H$ such that $M_{i-1}$ is compatible with $M_i$ for $1\leq i\leq r$.
\end{theorem}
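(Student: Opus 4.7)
The plan is to proceed by induction on $n = |R| = |B|$, using Lemma~\ref{lemma:CompatibleSequence} to reduce to the two sides of the first ham-sandwich cut and then invoking the inductive hypothesis on each side. The base case $n=1$ is trivial, since there is only one $BR$-matching.

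For the inductive step, let $\ell$ be the first ham-sandwich cut used in the recursive construction of $H$, and let $P_L$ and $P_R$ denote the points of $P$ on the two sides of $\ell$. Because $H$ is produced by recursively applying ham-sandwich cuts starting with $\ell$, no segment of $H$ crosses $\ell$, and $H$ decomposes as $H = H_L \cup H_R$, where $H_L$ and $H_R$ are ham-sandwich matchings of $P_L$ and $P_R$, respectively. Applying Lemma~\ref{lemma:CompatibleSequence} to $M$ and $\ell$, we obtain a $BR$-matching $M^\ell$ connected to $M$ by a compatible sequence and having no segment that crosses $\ell$. Hence $M^\ell = M_L \cup M_R$, with $M_L$ and $M_R$ being $BR$-matchings of $P_L$ and $P_R$.

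By the inductive hypothesis applied on $P_L$ (respectively on $P_R$), there is a compatible sequence $M_L = N_0, N_1, \ldots, N_p = H_L$ of $BR$-matchings of $P_L$, and a compatible sequence $M_R = N'_0, N'_1, \ldots, N'_q = H_R$ of $BR$-matchings of $P_R$. The key observation is that we may ``lift'' each of these to a compatible sequence on $P$ by appending the untouched matching on the other side: consider
\[
M^\ell = N_0 \cup M_R,\ N_1 \cup M_R,\ \ldots,\ N_p \cup M_R = H_L \cup M_R,\ H_L \cup N'_1,\ \ldots,\ H_L \cup N'_q = H.
\]
Each listed graph is a $BR$-matching of $P$, because no segment crosses $\ell$ and the two halves are themselves $BR$-matchings. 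For consecutive entries, their union is planar: within each half it is planar by compatibility of $N_{i-1}, N_i$ (resp.\ $N'_{j-1}, N'_j$), and no segment from one half can cross a segment from the other since all segments lie strictly on one side of $\ell$. Concatenating this sequence with the compatible sequence connecting $M$ to $M^\ell$ produces the required compatible sequence from $M$ to $H$.

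The main obstacle has already been absorbed into Lemma~\ref{lemma:CompatibleSequence}, whose repeated application is what permits the clean splitting at $\ell$; the only remaining subtlety in the argument above is to verify that lifting one-sided compatible sequences to $P$ preserves compatibility, which follows immediately from the fact that segments confined to different open half-planes of $\ell$ cannot cross.
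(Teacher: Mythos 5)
Your proposal is correct and follows essentially the same route as the paper's proof: induct on $|P|$, use Lemma~\ref{lemma:CompatibleSequence} to reach a matching $M^\ell$ avoiding the first ham-sandwich cut $\ell$, apply induction on each side, and concatenate the two one-sided sequences, which is valid because matchings confined to opposite sides of $\ell$ are always compatible. Your write-up is in fact slightly more explicit than the paper's on two points it leaves implicit — that $H$ restricted to each side is itself a ham-sandwich matching (needed to invoke the inductive hypothesis), and how exactly the two sequences are interleaved.
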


\begin{proof}
The proof goes by induction on the size of $P$. 
Notice that the result follows trivially if $|P| = 2$ as it contains a unique $BR$-matching.

Assume that the result holds for any bichromatic point set with fewer than $n$ points. 
Let $\ell$ be the first ham-sandwich cut line used to construct $H$.
By Lemma~\ref{lemma:CompatibleSequence}, there is a matching $M^\ell$ such that $M$ and~$M^\ell$ are connected, and no segment of $M^\ell$ crosses $\ell$.
Let $\Pi_1$ and $\Pi_2$ be the two halfplanes supported by $\ell$. 
For~$i\in \{1,2\}$, let $P_i$ be the set of points of $P$ that lie in $\Pi_i$ and let $M_i$ and $H_i$ be, respectively, the set of segments of $M^\ell$ and $H$ that are contained in $\Pi_i$.
Because $|P_i| < |P|$,  $M_i$ and $H_i$ are connected by the induction hypothesis.

Since every $BR$-matching of $P_1$ is compatible with every $BR$-matching of $P_2$, we can merge the two compatible sequences obtained by the 
recursive construction that certify that $M_i$ and $H_i$ are connected.
Thus, $M^\ell$ is connected with $H$ and because $M$ is connected to $M^\ell$, $M$ and $H$ are also connected. 
\end{proof}

Let $V$ be the set of all $BR$-matchings of $P$ and let $G_P$ be the \emph{compatible bichromatic matching graph} of~$P$ with vertex set $V$, where there is an edge between two vertices if their corresponding $BR$-matchings are compatible. 

\begin{corollary}\label{corollary:G_P is connected}
Given a set of points $P= B\cup R$ in general position such that $|R|=|B| = n$,
the graph $G_P$ is connected.
\end{corollary}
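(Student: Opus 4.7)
The plan is to derive the corollary directly from Theorem~\ref{theorem:The graph is connected} by using a ham-sandwich matching as a ``hub'' through which all $BR$-matchings can be routed. Since the theorem already establishes that every $BR$-matching is connected to a fixed ham-sandwich matching of $P$, transitivity of the ``connected'' relation will immediately yield connectivity between any pair of $BR$-matchings, which is precisely the statement that $G_P$ is connected.

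More concretely, I would proceed as follows. First, fix a ham-sandwich matching $H$ of $P$; such an $H$ exists by the recursive ham-sandwich construction described at the beginning of Section~\ref{Section:Ham-Sandwich}, which relies only on the fact that $|R|=|B|$ and that $P$ is in general position. Next, let $M$ and $M'$ be two arbitrary vertices of $G_P$, i.e., two $BR$-matchings of $P$. By Theorem~\ref{theorem:The graph is connected}, there exist compatible sequences
\[
M = M_0, M_1, \ldots, M_r = H \qquad \text{and} \qquad M' = M'_0, M'_1, \ldots, M'_s = H,
\]
each consisting of $BR$-matchings where consecutive ones are compatible. Concatenating the first sequence with the reverse of the second gives
\[
M = M_0, M_1, \ldots, M_r = H = M'_s, M'_{s-1}, \ldots, M'_0 = M',
\]
which is a walk in $G_P$ from $M$ to $M'$. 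Hence every pair of vertices of $G_P$ lies in the same connected component.

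I do not anticipate any real obstacle here: all the substantive work is done by Lemma~\ref{lemma:CompatibleBRMatching}, Lemma~\ref{lemma:CompatibleSequence}, and Theorem~\ref{theorem:The graph is connected}. The only minor point to verify is that a ham-sandwich matching exists at all, but this is guaranteed by the general-position assumption and the recursive construction of Section~\ref{Section:Ham-Sandwich}. The corollary is therefore essentially a packaging statement: it re-expresses Theorem~\ref{theorem:The graph is connected} as a graph-theoretic property of $G_P$.
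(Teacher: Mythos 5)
Your proposal is correct and matches the paper's (implicit) argument exactly: the corollary is left as an immediate consequence of Theorem~\ref{theorem:The graph is connected}, obtained by routing any two $BR$-matchings through a fixed ham-sandwich matching $H$ and concatenating the two compatible sequences. Nothing further is needed.
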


While Theorem~\ref{theorem:The graph is connected} implies the connectedness of $G_P$, proving non-trivial upper bounds on its diameter remains elusive. Lemma~\ref{lemma:Lower Bound}, depicted in Fig.~\ref{fig:LowerBound}, provides a linear lower bound on the diameter of $G_P$. Since Sharir and Welzl~\cite{Sharir2006} proved that the number of vertices in $G_P$ is at most $O(7.61^n)$, we obtain a trivial exponential bound on its diameter. 

\begin{figure}[htb]
\centering
\includegraphics{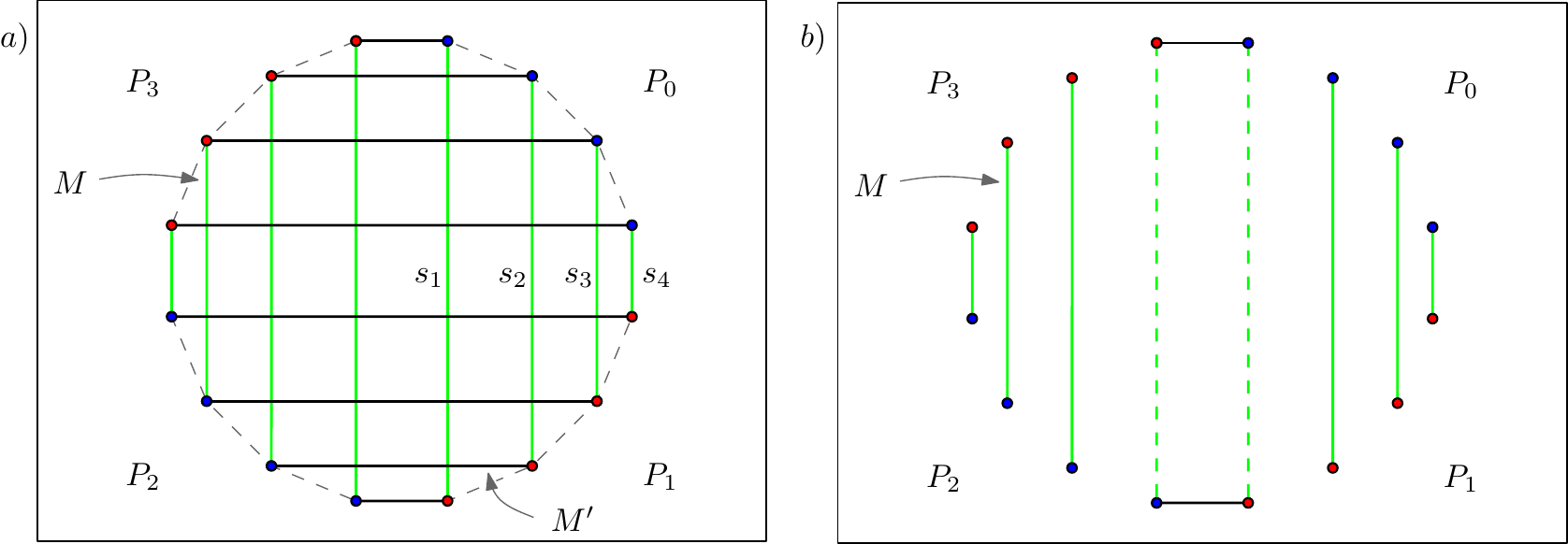}
\caption{$a)$ Two $BR$-matchings $M$ and $M'$ at distance $\Omega(n)$ in the graph $G_P$ that are both ham-sandwich matchings. $b)$ The only two segments compatible with $M$ are the topmost and bottommost segments of $M'$.}
\label{fig:LowerBound}
\end{figure}

\begin{lemma}\label{lemma:Lower Bound}
There exists a bichromatic set $P = B\cup R$ of $4n$ points that
admits two $BR$-matchings at distance $\Omega(n)$ in  $G_P$.
\end{lemma}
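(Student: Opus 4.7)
The plan is to exhibit an explicit $4n$-point configuration, define two specific $BR$-matchings $M$ and $M'$ (both of which are ham-sandwich matchings), establish a strong local-rigidity property for matchings compatible with $M$ (and symmetrically for $M'$), and then deduce the linear lower bound by combining this rigidity with an inductive argument on the nested structure of the configuration.

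First, I would describe the point set following Figure~\ref{fig:LowerBound}: place $2n$ red and $2n$ blue points so that $M$ consists of $2n$ nested (stacked) bichromatic segments and $M'$ consists of $2n$ nested segments arranged transversely, in such a way that every segment of $M'$ other than its topmost segment $t$ and bottommost segment $b$ properly crosses at least one segment of $M$, and, by the symmetry built into the construction, the analogous statement holds with the roles of $M$ and $M'$ exchanged. I would then verify that $M$ and $M'$ are ham-sandwich matchings by exhibiting an appropriate recursive sequence of ham-sandwich cuts realizing each of them.

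The central structural claim, corresponding to part (b) of the figure, is that the only bichromatic segments on $P$ that do not cross any segment of $M$ are the segments of $M$ themselves together with the two extremal segments $t,b$ of $M'$. A short case analysis then forces any $BR$-matching $N$ compatible with $M$ to be either $M$ itself or the unique perfect matching obtained from $M$ by simultaneously inserting $t$ and $b$ and performing the forced local rematching at each of the two extremities; in particular, any such $N$ differs from $M$ on only $O(1)$ pairs. The analogous statement holds for matchings compatible with $M'$.

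The final step exploits the nested nature of the construction: after the unique allowed boundary modification, the remaining inner $2(n-1)$ red and $2(n-1)$ blue points, together with the inner segments of $M$ and $M'$, form a smaller instance of the same configuration. Iterating this observation, one can argue that each step $M_i \to M_{i+1}$ of any compatible sequence from $M$ to $M'$ can only replace $O(1)$ segments of $M$ by segments of $M'$; a potential function (for instance, the number of segments of $M$ still present in $M_i$, or the index of the outermost intact $M$-layer) then drops by at most a constant per step, forcing $k = \Omega(n)$. The hard part will be ruling out clever intermediate matchings $M_i$ that are compatible with neither $M$ nor $M'$ yet might still accomplish non-local restructuring in a single step; handling this requires combining the union-planarity of consecutive matchings with the combinatorial rigidity imposed by the bichromatic, nested construction in a careful inductive argument.
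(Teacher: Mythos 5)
Your construction and your base-case observation (only the topmost and bottommost segments of $M'$ are compatible with $M$) match the paper's, but the core of the lower bound is exactly the step you defer as ``the hard part,'' and without it the argument does not go through. Your potential function (``each step replaces $O(1)$ segments of $M$'') is a claim about \emph{arbitrary} consecutive pairs $M_i, M_{i+1}$ in the sequence, yet everything you actually establish concerns only matchings compatible with $M$ or with $M'$ themselves; nothing you prove rules out an intermediate matching, compatible with neither endpoint, whose successor discards many $M$-segments at once. The self-similarity of the inner configuration does not rescue this, because an intermediate matching need not restrict to a matching of the inner point set.

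The paper closes this gap with two concrete facts you are missing. First, a \emph{global} rigidity claim: in \emph{every} $BR$-matching $W$ of $P$, each point $p$ is matched to either $M(p)$ or $M'(p)$ --- otherwise the segment from $p$ to its $W$-partner would separate $P$ into two parts with unequal numbers of red and blue points, contradicting the existence of a planar bichromatic matching on each side. So every intermediate matching is a hybrid of $M$ and $M'$, with no exotic segments to analyze. Second, an ordering constraint among the layers: for an endpoint $p$ of $s_i$, the segment $pM'(p)$ crosses $s_{i-1}$, so any $BR$-matching containing $s_{i-1}$ must contain $s_i$. Hence $s_{i-1}$ must be absent from some earlier matching in the sequence before $s_i$ can be removed, and a one-line induction gives that removing $s_i$ takes at least $i$ steps. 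This replaces your ``potential drops by at most a constant per step'' heuristic with a rigorous peeling order that applies uniformly to all intermediate matchings. You would need to add both ingredients to make your sketch a proof.
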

\begin{proof}
Let $P$ be the set of vertices of a regular $4n$-gon $Q$. Partition $P$ into four disjoint sets $P_0, \ldots, P_3$, each of $n$ consecutive points along the boundary of $Q$. Assume without loss of generality that the unique edge  on the boundary of $Q$ joining a point from $P_3$ with a point in $P_0$ is parallel to the $x$-axis. Moreover, assume that this edge is the topmost edge of $Q$.
Let $B = P_0\cup P_2$ and let $R = P_1\cup P_3$.
Note that for any $0\leq i\leq 3$, both bichromatic point sets $P_i\cup P_{i+1}$ and $P_i\cup P_{i-1}$ have unique $BR$-matchings   where sum is taken modulo~$3$; see Fig.~\ref{fig:LowerBound}$(a)$.

Let $M$ be the union of the unique $BR$-matchings of $P_0\cup P_1$ and $P_2\cup P_3$. Analogously, let $M'$ be the union of the $BR$-matchings of $P_1\cup P_2$ and $P_0\cup P_3$; see Fig.~\ref{fig:LowerBound}$(a)$ for an illustration.
Let $p$ be a point of $P$ and let $M(p)$ and $M'(p)$ be the points matched with $p$ in $M$ and $M'$, respectively.
We claim that in any $BR$-matching $W$ of $P$, the point $p$ is matched with either $M(p)$ or $M'(p)$. 
If this is not the case, then the segment $s$ joining $p$ with its neighbor in $W$ separates $P$ into two sets, one of them having an unbalanced number of blue and red points---a contradiction as the points on each side of $s$ support a $BR$-matching that does not cross $s$.

Let $s_1, \ldots s_{n}$ be the last $n$ segments of $M$ when sorted from left to right. 
We claim that to connect~$M$ with a $BR$-matching that doesn't contain $s_i$, we need a compatible sequence of $BR$-matchings of length at least $i$. The proof goes by induction on $i$. 

For the base case, note that only the topmost and bottommost segments of $M'$ are compatible with $M$, any other bichromatic segment either crosses an edge of $M$ or splits $P$ into two unbalanced sets. By adding these two edges and removing the segments of $M$ that form a cycle with them, we obtain the unique $BR$-matching compatible with $M$ where $s_1$ is not present; see Fig.~\ref{fig:LowerBound}$(b)$

Assume that the result holds for any $j<i$. Let $p$ be an endpoint of $s_i$ and note that the segment~$uM'(p)$ crosses $s_{i-1}$. 
That is, $p$ is matched through $s_i$ in any $BR$-matching that contains~$s_{i-1}$.
Consequently, to reach a $BR$-matching where $s_i$ is not present, we need to first remove $s_{i-1}$, which requires at least $i-1$ steps by the induction hypothesis, and then at least one step more to remove $s_i$. Therefore, a sequence of at least~$i$ $BR$-matchings is needed to connect $M$ with a $BR$-matching that doesn't contain $s_i$.

Thus, to connect $M$ with $M'$ where $s_n$ is not present, we need a compatible sequence of $\Omega(n)$ $BR$-matchings proving our result.
\end{proof}

\section{Well-colored graphs and basic tools}\label{Section:Tools}
In this section, we introduce some tools that will help us prove Lemma~\ref{lemma:CompatibleBRMatching} in Section~\ref{Section:Avoiding Chromatic Cuts}.

Given a face $F$ of a PSLG,
we denote its interior by $int(F)$ and its boundary by $\partial F$. 
In the remainder, we will only consider bounded faces when we refer to a face of a PSLG.
A vertex $v$ is \emph{reflex in $F$} if there is a non-convex connected component in the intersection of $int(F)$ with any disk centered at $v$.
Notice that a vertex can be reflex in at most one face of a PSLG. 
A vertex of a PSLG is \emph{reflex} if it is reflex in one of its bounded faces.

Let $F$ be a face of a given PSLG whose reflex vertices are colored either blue or red. 
We say that $F$ is \emph{well-colored} if the sequence of reflex vertices along its boundary alternates in color. In the same way, a PSLG is well-colored if all its faces are well-colored. Since a vertex is reflex in at most one face, a well-colored PSLG has an even number of reflex vertices.

Let $G$ be a well-colored PSLG. The \emph{boundary of $G$}, denoted by $\bd$, is the union of all the edges in $G$. The interior of $G$ is the union of the interior of its faces.
Let $M$ be a $BR$-matching such that all segments of $M$ are contained in the interior of $G$, i.e., no segment of $M$ crosses and edge of $G$. 
We show how to \emph{glue} the segments of $M$ to $G$ in such a way that every endpoint of $M$ becomes a reflex vertex. We then provide a technique to construct a new $BR$-matching, compatible with $M$, by matching the reflex vertices of $G$ after the gluing.

\subsection{Coloring a PSLG}

We say that two points $x$ and $y$ are \emph{visible} if the open segment $(x,y)$ is contained in the interior of $G$ and crosses no segment of $M$.
Throughout this paper, the color of a point not in $P$, either on a bichromatic segment or on $\bd$, depends on the position from which it is viewed; see Fig.~\ref{fig:ColoringTheBoundary} for an example.

Assume that $F$ is a well-colored face of $G$ and let $x$ be a point on $\partial F$. 
Let $y$ be a point visible from $x$.
Walk in a straight line from $y$ towards $x$ and make a left turn when reaching $x$, following the boundary of $F$ counterclockwise until reaching a reflex vertex $r$ (if $x$ is reflex, then $r=x$). We say that $x$ is blue (\emph{resp.} red) when viewed from $y$ if $r$ is blue (\emph{resp.} red). If $F$ contains no reflex vertex, then the color of $x$ when viewed from $y$ can be arbitrarily chosen to be blue or red.

This coloring scheme can be used for segments as well. 
For $r\in R$ and $b\in B$, let $x$ be a point in the interior of the bichromatic segment $s = [r,b]$. Let $y$ be a point in the plane, not collinear with $r$ and $b$.
We say that $x$ is \emph{blue} when viewed from $y$ if the triple $y,x,b$ makes a left turn. Otherwise, we say that $x$ is \emph{red} when viewed from $y$. 

Let $z$ and $z'$ be two points such that each one lies either on $\bd$ or on a segment of $M$.
We say that $z$ and $z'$ are \emph{\cvisible} if they are visible and the color of $z$ when viewed from $z'$ is equal to the color of $z'$ when viewed from $z$.

\begin{figure}[htb]
\centering
\includegraphics{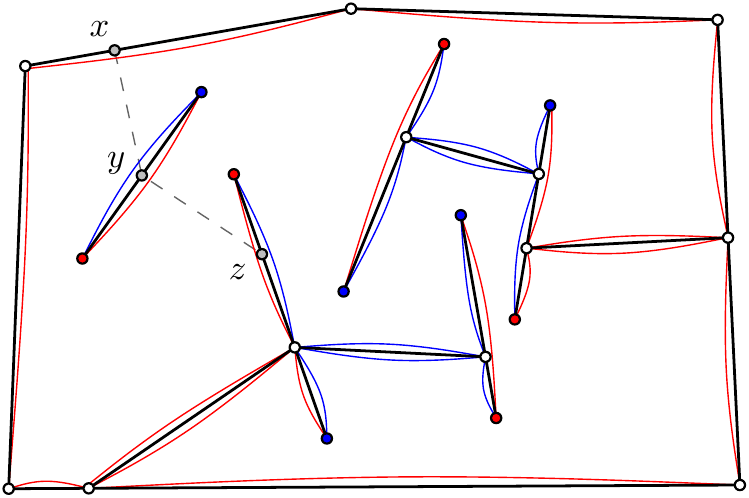}
\caption{The coloring of the boundary points of a PSLG, as well as of a bichromatic segment. The point $y$ is blue when viewed from $x$ but red when viewed from $z$. Moreover, $y$ and $z$ are \cvisible.}
\label{fig:ColoringTheBoundary}
\end{figure}

\subsection{Basic operators for well-colored PSLGs}

Let $z$ be a point in the interior of a segment $s = [a,b]$ of $M$ and let $z'$ be a non-reflex point on $\bd$ such that $z$ and $z'$ are \cvisible.
Operator $\textsc{Glue}$ produces a new PSLG by attaching $s$ to $\partial G$ using $z$ and $z'$ as points of attachment.
Formally, if $z'$ is not a vertex of $G$, then insert it as a vertex by splitting the edge of $G$ that contains $z'$.
Add the vertices $z, a $ and $b$ and the edges $[z,z']$, $[z,a]$ and $[z,b]$ to $G$. 
In the resulting PSLG, denoted by $\textsc{Glue}(G,z, z')$, $a$ and $b$ are both reflex vertices of degree one; see Fig.~\ref{fig:GlueCutOperator}.

Let $y$ and $y'$ be two \cvisible points on $\bd$ such that neither $y$ nor $y'$ are reflex vertices.
Operator $\textsc{Cut}$ joins $y$ with $y'$ in the following way.
Let $F$ be the face of $G$ that contains the segment $[y,y']$. 
If either $y$ or $y'$ is not a vertex of $G$, insert it by splitting the edge where it lies on.
Thus, $[y,y']$ is a chord of $F$, so adding the edge $[y,y']$ to $G$ forms two cycles and splits $F$ into two new faces.
In this way, we obtain a new PSLG $\textsc{Cut}(G, y, y')$ with one face more than $G$; see Fig.~\ref{fig:GlueCutOperator} for an illustration of this operation.

Since both operators join two points by adding the edge between them, we can define an operator $\textsc{GlueCut}$ on $G,z$ and $z'$, that behaves like $\textsc{Glue}$ when $z$ belongs to a segment in $M$, or behaves like $\textsc{Cut}$ if both $z$ and $z'$ belong to $\bd$. The PSLG output by this operator is denoted by $\textsc{GlueCut}(G, z,z')$.

A \emph{Glue-Cut Graph (GCG)} is a well-colored PSLG where every reflex vertex has degree one. Although this definition is more general, we can think of a GCG as a PSLG obtained by repeatedly applying $\textsc{GlueCut}$ operations between a convex polygon and the segments of a $BR$-matching contained in its interior; see Fig.~\ref{fig:GlueCutOperator} for an example.

\begin{figure}[htb]
\centering
\includegraphics{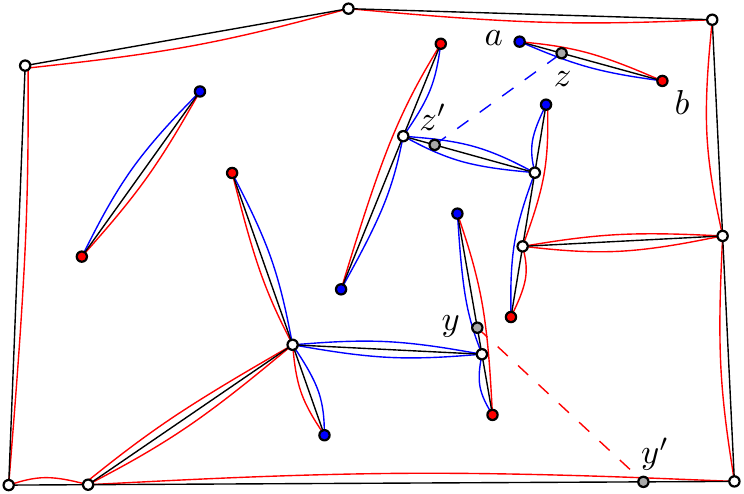}
\caption{Two pairs of \cvisible points $z,z'$ and $y,y'$, where $z$ and $z'$ can be joined by the $\textsc{Glue}$ operator and $y$ and $y'$ by the $\textsc{Cut}$ operator.}
\label{fig:GlueCutOperator}
\end{figure}

\begin{lemma}\label{lemma:Closed Under GlueCut}
The family of Glue-Cut Graphs is closed under the $\textsc{GlueCut}$ operator. 
\end{lemma}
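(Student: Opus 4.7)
My plan is to verify, case by case, the two defining properties of a GCG: (i) the output is a well-colored PSLG, and (ii) every reflex vertex has degree one. I would split on whether $\textsc{GlueCut}(G,z,z')$ behaves as a $\textsc{Cut}$ (both $z,z'$ on $\bd$) or as a $\textsc{Glue}$ ($z$ interior to a segment $s=[a,b]$ of the matching $M$). In both cases I would first use Euler's formula to track what happens to the face containing the chord: $\textsc{Cut}$ adds one edge to $\partial G$ and creates exactly one new face (splitting $F$ into $F_1,F_2$), while $\textsc{Glue}$ adds three new vertices $z,a,b$ (plus possibly subdividing an edge at $z'$) together with three new edges, leaving the face count unchanged and only attaching a ``Y-shaped'' spike into $F$.

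For the degree-one condition, I would observe that in the $\textsc{Cut}$ case no new reflex vertex is created ($y,y'$ were assumed non-reflex in $F$, and since their interior angle at $F$ is at most $\pi$, splitting it by the chord $[y,y']$ leaves both resulting angles at most $\pi$, so $y,y'$ remain non-reflex in $F_1$ and $F_2$); the existing reflex vertices of $G$ are untouched. In the $\textsc{Glue}$ case, the only new reflex vertices are $a$ and $b$, each of which is an endpoint of the unique new edge incident to it, hence of degree one; the intermediate vertex $z$ has three incident edges but lies on the straight matching segment extended by $[z,z']$, so its two face-angles are convex and $z$ is non-reflex; $z'$ stays non-reflex for the same angle argument as in the $\textsc{Cut}$ case.

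The core of the proof is showing that well-coloredness is preserved, and this is where c-visibility does all the work. For the $\textsc{Cut}$ case I would list the reflex vertices of $F$ in counterclockwise order and note that the chord partitions them into two consecutive arcs, one for $F_1$ and one for $F_2$; I would then match the definition of ``color of $y$ seen from $y'$'' (the first reflex vertex encountered when turning left at $y$ and walking $\partial F$ counterclockwise) with the first reflex vertex of $F_1$, and similarly for $y'$ seen from $y$ with the first reflex vertex of $F_2$. The original cyclic alternation of colors in $F$ forces each arc to alternate internally; the c-visibility equality of these two ``first'' colors is exactly the parity condition needed so that each arc has even length and closes up cyclically with alternating colors around $\partial F_1$ and $\partial F_2$. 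For the $\textsc{Glue}$ case I would analyze the new boundary traversal of $F$: the spike inserts the two new reflex vertices $a$ and $b$ in some CCW order between the two reflex neighbors $r_k, r_{k+1}$ of $z'$ on $\partial F$. Using the definition of the color of $z$ viewed from $z'$ (via the left-turn test on $z',z,b$), I would identify which of $a,b$ comes first in the CCW traversal, and compare its color with $r_k$; c-visibility between $z$ and $z'$ then gives exactly that the colors along $r_k, \text{(first new)}, \text{(second new)}, r_{k+1}$ alternate, since $a$ and $b$ have distinct colors by bichromaticity.

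The main obstacle I expect is bookkeeping the orientation conventions consistently: the definitions of ``color viewed from'' on $\bd$ and on a matching segment are stated with different primitives (left turn along $\partial F$ versus signed triangle $y,x,b$), and one must verify that they agree under the CCW traversal of the new face so that c-visibility translates cleanly into ``inserted colors alternate with their neighbors.'' Once this is set up carefully on a figure, both cases reduce to checking that an arc of an alternating cyclic sequence, together with a new pair of opposite-color vertices inserted at the c-visible attachment point, again forms an alternating cyclic sequence, completing the proof.
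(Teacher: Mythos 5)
Your proposal is correct and follows essentially the same route as the paper's proof: a case split into $\textsc{Cut}$ and $\textsc{Glue}$, checking that reflex vertices are only created as degree-one endpoints of the glued segment, and using \cvisibility to show that the cyclic alternation of reflex-vertex colors is preserved in each (new) face. Your version is somewhat more explicit than the paper's (the arc/parity bookkeeping for $\textsc{Cut}$ and the angle checks at $z$ and $z'$), but it is the same argument.
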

\begin{proof}
Let $G$ be a GCG and $z$ be a point in a bichromatic segment $s$ contained in the interior of $G$. Let $z', y$ and $y'$ be points on $\partial G$ such that $z$ and $z'$ (\emph{resp.} $y$ and $y'$) are \cvisible.
When constructing $\textsc{Glue}(G,z,z')$, the endpoints of $s$ become reflex vertices of degree one. 
That is, we add one red and one blue reflex vertex to $G$. Therefore, to prove that $\textsc{Glue}(G,z,z')$ is well-colored, it suffices to show that the points are added in the correct order which is guaranteed by the \cvisibility of $z$ and $z'$; see Fig.~\ref{fig:GlueCutOperator}.

On the other hand, $\textsc{Cut}(G, y, y')$ neither adds nor removes reflex vertices of $G$. 
This operation divides a well-colored face of $G$ into two, by inserting a new edge.
Consider either of the           new faces.  Let $a,b$ be the first reflex vertices found when following the boundary from this edge on each side.   Since $y$ and $y'$ are \cvisible when 
$\textsc{Cut}$ is invoked, we know that $a$ and $b$ are of different color.  Thus each new face, and therefore $\textsc{Cut}(G, y, y')$,
is well-colored; see Fig.~\ref{fig:GlueCutOperator}.
\end{proof}

\subsection{Simplification of a GCG}

Let $F = (v_1,v_2,\ldots,v_k,v_1)$ be a face of a GCG given as a sequence of its vertices in clockwise order along its boundary.
For each vertex $v_i$, if the triple $v_{i-1},v_i,v_{i+1}$ makes a right turn, let $x_i$ be a point at distance $\varepsilon>0$ from $v_i$, lying on the bisector of the convex angle formed by $[v_{i-1},v_i]$ and $[v_i,v_{i+1}]$. If $v_i$ is reflex in $F$, let $x_i = v_i$.
Otherwise, if $v_{i-1},v_i,v_{i+1}$ are collinear, do nothing.
Let $\mathcal P_F=(x_1, \ldots, x_k, x_1)$ (consider only the indices where $x_i$ is defined).
By choosing $\varepsilon$ sufficiently small, $\mathcal P_F$ is a simple polygon contained in $F$ such that every reflex vertex $v_j$ in $F$ remains reflex in $\mathcal P_F$ and no reflex vertex is created; see Fig.~\ref{fig:SimpleExtension}.
We call $\mathcal P_F$ a \emph{simplification} of $F$. Though the simplification of a face $F$ is not unique as it depends on the choice of $\varepsilon$, the results presented in this paper hold for any simplification. Therefore, when alluding to $\mathcal P_F$, we refer to any simplification of $F$.

\begin{observation}\label{obs:SimpleExtension}
For every bounded face $F$ of a GCG,  $\mathcal P_F$ is a simple polygon contained in $F$ 
such that $F$ and $\mathcal P_F$ share the same set of reflex vertices.
\end{observation}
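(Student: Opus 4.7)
The plan is to derive all three properties claimed in the observation---simplicity of $\mathcal P_F$, containment in $F$, and agreement of the reflex-vertex sets---from a uniform local analysis near each vertex, made precise by choosing $\varepsilon$ small enough. First I would fix, for each position $i$ along the boundary walk of $F$, an open disk $D_i$ centered at $v_i$ that meets only the two edges of $\partial F$ incident with $v_i$ at that traversal; since the boundary walk may revisit the same geometric point (at a dangling degree-one vertex, for instance), each visit is handled separately. Because the walk has finitely many steps, there is a common radius $\rho>0$ that works for every $i$, and I would then take $\varepsilon < \rho/2$. With this choice, each $x_i$ lies in $\overline{D_i}\cap\overline{F}$: on $\partial F$ when $v_i$ is reflex, and strictly inside $F$ when $v_i$ is a convex corner, since the bisector of a convex angle points into $F$. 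Moreover every edge $[x_i,x_{i+1}]$ of $\mathcal P_F$ stays within a thin tube around the corresponding edge $[v_i,v_{i+1}]$ of the walk.

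Simplicity and containment then follow together. Two non-adjacent edges of $\mathcal P_F$ come from two non-adjacent edges of the boundary walk, and by the choice of the $D_i$'s their tubes are pairwise disjoint, so the new edges cannot cross; adjacent edges share only the vertex $x_i$ by construction. Hence $\mathcal P_F$ is a simple closed polygon, and because each tube sits on the $F$-side of $\partial F$, the Jordan curve theorem confines the bounded region enclosed by $\mathcal P_F$ to $F$. To match the reflex sets, I would use a continuity argument: if $v_i$ is reflex in $F$ then $x_i=v_i$, and the interior angle at $x_i$ in $\mathcal P_F$ differs from the interior angle at $v_i$ in $F$ by $O(\varepsilon)$, so it remains strictly greater than $\pi$; if $v_i$ is convex, $x_i$ sits strictly inside $F$ along the bisector and the angle at $x_i$ in $\mathcal P_F$ stays strictly less than $\pi$, so no new reflex vertex is created.

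The main obstacle, and the reason the construction has to be done carefully, is the degree-one reflex vertex: an endpoint $a$ of a bichromatic segment $s=[z,a]$ glued into a face. At such a vertex the boundary walk has $v_{i-1}=v_{i+1}=z$ geometrically, while the interior angle at $v_i=a$ is $2\pi$. The two $z$-perturbations $x_{i-1}$ and $x_{i+1}$ must be treated as distinct points, taken along the two different angle bisectors corresponding to the two local traversals of $z$, and I would verify that they lie on opposite sides of $s$ and that the resulting thin wedge $x_{i-1}\,x_i\,x_{i+1}$ stays inside $F$. Once this pinching around each dangling edge is verified, the global simplicity and containment arguments of the previous paragraph apply uniformly, and the observation follows.
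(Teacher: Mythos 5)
The paper offers no proof of this observation at all: it is asserted in the paragraph defining $\mathcal P_F$, with the entire justification compressed into the phrase ``by choosing $\varepsilon$ sufficiently small.'' Your proposal supplies the missing argument, and its overall structure --- a uniform local radius around each \emph{visit} of the boundary walk, perturbation along interior bisectors at convex corners, a continuity bound on the interior angles, and separate treatment of the degree-one reflex endpoints where the walk pinches --- is the right one. In particular, distinguishing visits of the walk from geometric vertices is exactly the subtlety that the paper's one-line justification glosses over, and your observation that every reflex vertex of a GCG face is a dangling endpoint with interior angle $2\pi$ makes the angle-continuity step go through.

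One step is stated too strongly. You claim that two non-adjacent edges of the boundary walk have pairwise disjoint tubes, so the corresponding edges of $\mathcal P_F$ cannot cross. This fails for a bridge edge of the face boundary that the walk traverses twice in opposite directions with the two traversals \emph{not} consecutive: the gluing edge $[z,z']$ created by $\textsc{Glue}$ is exactly such an edge (its two traversals are separated by the walk around the attached segment $[a,b]$). The two traversals occupy the \emph{same} tube, so disjointness is false as stated; what saves the argument is that the two perturbed copies lie on opposite sides of the segment $[z,z']$, at distance roughly $\varepsilon$ from it, and hence still do not intersect. You make precisely this ``opposite sides'' argument for the dangling edges $[z,a]$ and $[z,b]$, but there the two traversals are consecutive in the walk and so are already covered by your adjacent-edge case; the non-adjacent doubly-traversed bridge is the configuration that genuinely needs it. With that case added, the proof is complete.
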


\begin{figure}[htb]
\centering
\includegraphics{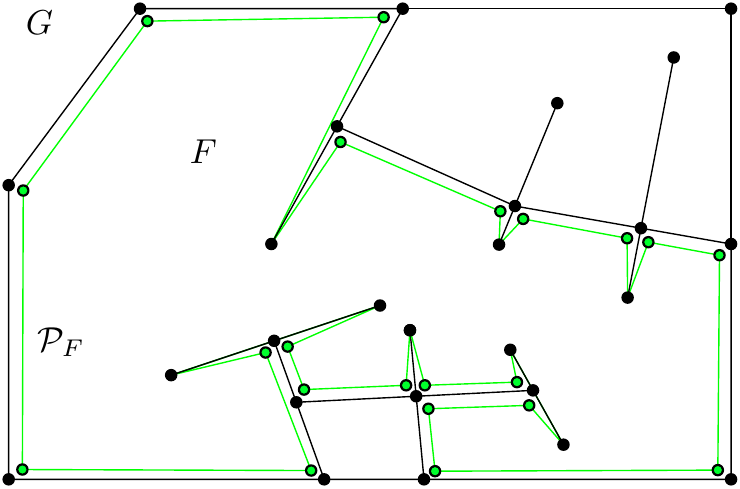}
\caption{A face $F$ of a GCG $G$ and its simplification $\mathcal P_F$, contained in $F$, with the same set of reflex vertices.}
\label{fig:SimpleExtension}
\end{figure}

Let $F_1, \ldots, F_k$ be the bounded faces of a GCG $G$.
We call $\mathcal P_G = \bigcup \mathcal P_{F_i}$ the \emph{simplification of $G$}. 
Note that $\mathcal P_G$ is the union of a set of disjoint simple polygons.

\subsection{Merging a matching with a GCG}\label{section:GluingTheSegments}

\begin{lemma}\label{lemma:AbellanasAlgorithm}
(Rephrasing of Lemma 5 of~\cite{Abellanas2008220}) Let $\mathcal P$ be a simple polygon with an even number of reflex vertices.
There exists a perfect planar matching $M$ of the reflex vertices of $\mathcal P$, such that each segment of $M$ is contained in $\mathcal P$ (or on its boundary). 
\end{lemma}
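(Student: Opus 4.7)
The plan is to prove the statement by induction on the number $2k$ of reflex vertices of $\mathcal P$. The base case $k = 0$ is a convex polygon, for which the empty matching is vacuously a perfect planar matching of the (empty set of) reflex vertices. For the inductive step, the goal is to find a diagonal $[u,v]$ of $\mathcal P$ whose endpoints are reflex vertices of $\mathcal P$, that is contained in $\mathcal P \cup \partial \mathcal P$, and that cuts $\mathcal P$ into two sub-polygons $\mathcal P_1$ and $\mathcal P_2$ each containing an even number of reflex vertices of $\mathcal P$ other than $u$ and $v$. Adding $[u,v]$ to the matching and applying the induction hypothesis to $\mathcal P_1$ and $\mathcal P_2$ then yields the required matching.

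To find such a diagonal, I would first triangulate $\mathcal P$ and consider its weak dual tree $\tau$. Each edge of $\tau$ corresponds to a diagonal of the triangulation whose removal cuts $\mathcal P$ into two pieces. Label each such diagonal by the parity (in $\{0,1\}$) of the number of reflex vertices of $\mathcal P$ strictly on one distinguished side. As one walks along $\tau$ from a leaf inward, moving from one diagonal to an adjacent one reassigns a single triangle between the two sides, so the parity label changes by a well-controlled amount. A discrete intermediate-value argument then produces a diagonal $d^{*}$ for which both sides of the induced split contain an even number of reflex vertices of $\mathcal P$, under the appropriate convention for the endpoints of $d^{*}$. If $d^{*}$ already has two reflex endpoints we are done; otherwise, a local visibility argument inside $\mathcal P$ lets us rotate $d^{*}$ about one of its endpoints toward a reflex vertex along the visible boundary, and then rotate about the new reflex endpoint toward another reflex vertex, without losing the parity balance and without the chord leaving $\mathcal P$.

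The main obstacle I foresee is this rotation step: transforming an arbitrary balanced diagonal into one whose endpoints are both reflex vertices while simultaneously preserving the invariants that the chord stays inside $\mathcal P$ and that both sides of the split continue to hold an even number of reflex vertices. This will require a careful local analysis of visibility inside $\mathcal P$ together with bookkeeping of how the reflex count on each side changes as the chord rotates across triangulation edges or sweeps past individual vertices. A secondary subtlety is the convention for counting reflex vertices at the endpoints of the chord: after $u$ and $v$ are matched they should no longer be counted, so the induction is cleanest when phrased in terms of ``unmatched reflex vertices remaining to be matched'' rather than geometric reflexness of the resulting sub-polygons, since $u$ or $v$ might themselves remain reflex in one of $\mathcal P_1, \mathcal P_2$ after the cut. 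Once these points are handled, the induction hypothesis applies to each of $\mathcal P_1$ and $\mathcal P_2$, which each have strictly fewer unmatched reflex vertices than $\mathcal P$, and combining the three matchings gives the desired non-crossing perfect matching of the reflex vertices of $\mathcal P$ with all segments contained in $\mathcal P$.
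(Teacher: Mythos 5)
The paper does not actually prove this lemma: it imports it verbatim from Abellanas et al.~\cite{Abellanas2008220}, whose argument (as sketched in the paper's Remarks) partitions $\mathcal P$ into convex cells by shooting rays from the reflex vertices into the interior and then matches reflex vertices within each convex piece. Your divide-and-conquer strategy is therefore a genuinely different route, but as written it has a gap at its crux, which you yourself flag: you never establish that a chord $[u,v]$ with \emph{both endpoints reflex}, contained in $\mathcal P$, and splitting the remaining reflex vertices into two even groups exists. The intermediate-value argument on the dual tree can plausibly deliver a parity-balanced triangulation diagonal, but its endpoints are arbitrary vertices, and the proposed ``rotation'' to reflex endpoints is exactly the hard part: as the chord rotates about one endpoint, the other endpoint slides along $\partial \mathcal P$ and sweeps past vertices (possibly reflex ones), so both containment in $\mathcal P$ and the even/even split can be destroyed, and the first reflex vertex reached need not preserve either invariant; fixing both endpoints simultaneously is not a routine verification. (Even the weaker claim that \emph{some} pair of mutually visible reflex vertices exists already needs an argument, e.g.\ via geodesics turning only at reflex vertices.)

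A second structural problem is the induction hypothesis itself. After cutting along $[u,v]$, the endpoints $u$ and $v$ may remain reflex in one or both of $\mathcal P_1,\mathcal P_2$; then $\mathcal P_i$ can have an \emph{odd} number of reflex vertices and the lemma as stated does not apply to it. You propose to recurse on the ``unmatched reflex vertices remaining to be matched,'' but that is a strictly stronger statement---a non-crossing perfect matching of a \emph{prescribed} even-size subset of the reflex vertices---and in that generality it is false, since two prescribed reflex vertices of a simple polygon need not see each other (consider a comb). You would need to either choose $[u,v]$ so that both $u$ and $v$ become convex in both sub-polygons (adding yet another constraint to the already unproven existence claim), or formulate and prove the precise restricted strengthening; neither is done. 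The convex-partition proof of the cited reference sidesteps both difficulties, which is presumably why the authors take it off the shelf rather than reprove the lemma.
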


Let $C = \{r_0, \ldots, r_k\}$  be the set of reflex vertices of a simple polygon $\mathcal P$ sorted along its boundary. 
Let $M$ be a perfect planar matching of $C$ that exists by Lemma~\ref{lemma:AbellanasAlgorithm}.
Let $[r_i, r_j]$ be a segment of $M$, and note that this segment splits $\mathcal P$ into two sub-polygons. Notice that if $[r_i,r_j]$ is contained in the boundary of $\mathcal P$, then one sub-polygon is a segment and the other one is $\mathcal P$ itself. 
In order for $M$ to be perfect and planar, 
each sub-polygon must contain an even number of reflex vertices. Therefore, if a segment $[r_i, r_j]$ belongs to $M$, then $i \bmod 2 \neq j \bmod 2$. This implies that if $\mathcal P$ is well-colored, then $M$ is a $BR$-matching.

The main tool to construct $BR$-matchings of the reflex vertices of a GCG comes from the following lemma; see Fig.~\ref{fig:ReflexBRMatching} for an illustration.

\begin{figure}[htb]
\centering
\includegraphics{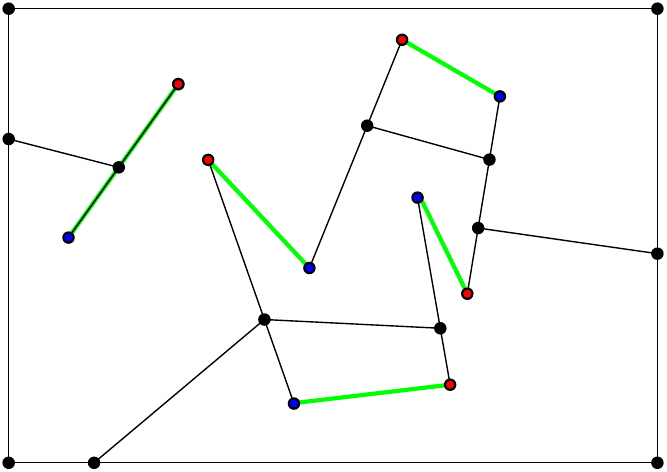}
\caption{A GCG and a $BR$-matching of its reflex vertices.}
\label{fig:ReflexBRMatching}
\end{figure}

\begin{lemma}\label{lemma:ReflexMatching}
If $G$ is a GCG, then there is a $BR$-matching $M$ of the reflex vertices of $G$, such that each segment of $M$ is contained in $\mathcal P_G$ (or on its boundary).
\end{lemma}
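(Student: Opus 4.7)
The plan is to apply Lemma~\ref{lemma:AbellanasAlgorithm} to each simple polygon in the simplification $\mathcal P_G$, and glue the resulting matchings together. First I would argue that each bounded face $F$ of $G$ carries an even number of reflex vertices. Indeed, since $G$ is a GCG it is well-colored, so the reflex vertices along $\partial F$ alternate strictly between blue and red as we traverse the boundary cyclically; an alternating cyclic sequence of two colors must have even length, so $|\text{reflex}(F)|$ is even. By Observation~\ref{obs:SimpleExtension}, the simplification $\mathcal P_F$ is a simple polygon whose reflex vertices are exactly those of $F$ and appear in the same cyclic order, so $\mathcal P_F$ itself is well-colored and has an even number of reflex vertices.

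Next, I would invoke Lemma~\ref{lemma:AbellanasAlgorithm} on each $\mathcal P_{F}$ to produce a perfect planar matching $M_F$ of its reflex vertices with every segment contained in $\mathcal P_F$ (or on its boundary). The paragraph following Lemma~\ref{lemma:AbellanasAlgorithm} already observes that, because $\mathcal P_F$ is well-colored and any chord of this matching must split $\mathcal P_F$ into two sub-polygons each containing an even number of reflex vertices, consecutive reflex vertices joined by a segment must have opposite parity along the boundary, hence opposite colors. Thus $M_F$ is a $BR$-matching of $\text{reflex}(F)$ entirely contained in $\mathcal P_F$.

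Finally, I would set $M = \bigcup_F M_F$, the union over all bounded faces of $G$. Every reflex vertex of $G$ is reflex in exactly one bounded face, so $M$ is a perfect matching of the reflex vertices of $G$. The simple polygons $\{\mathcal P_F\}$ are pairwise disjoint (each lies strictly inside its own face), so the planarity of the individual $M_F$'s combines to planarity of $M$, and every segment of $M$ lies in $\mathcal P_G = \bigcup_F \mathcal P_F$. Bichromaticity follows from the per-face argument.

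The only delicate point, and the one I would treat as the main obstacle, is the parity/color-alternation step: it rests entirely on extracting from the definition of ``well-colored'' the fact that the alternation is cyclic along $\partial F$ (so both the even count and the correct pairing by Lemma~\ref{lemma:AbellanasAlgorithm} are forced). Once that is in hand, the remainder is a straightforward assembly of Observation~\ref{obs:SimpleExtension} and Lemma~\ref{lemma:AbellanasAlgorithm} applied face by face.
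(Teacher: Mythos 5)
Your proposal is correct and follows essentially the same route as the paper: apply Lemma~\ref{lemma:AbellanasAlgorithm} to each simplification $\mathcal P_{F_i}$, use the well-colored property (via the parity/alternation argument already given after that lemma) to conclude each per-face matching is bichromatic, and take the union over faces, noting that a vertex is reflex in at most one face. Your extra care about the cyclic alternation forcing an even reflex count per face is a detail the paper leaves implicit, but it is the same argument.
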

\begin{proof}
Let $F_1, \ldots, F_k$ be the well-colored faces of $G$. 
By Observation~\ref{obs:SimpleExtension}, each $F_i$ and its simplification $\mathcal P_{F_i}$ share the same set of reflex vertices.
By Lemma~\ref{lemma:AbellanasAlgorithm}, there is a matching $M_i$ of the reflex vertices of $\mathcal P_{F_i}$, such that each segment lies either in the interior or on the boundary of $\mathcal P_{F_i}$. 
Since $F_i$ is well-colored, $M_i$ is a $BR$-matching.
Note that a vertex can be reflex in at most one face of $G$.
Therefore, $M = \bigcup M_i$ is a $BR$-matching of the reflex vertices of $G$ and each segment of $M$ lies either in the interior or on the boundary of $\mathcal P_G$.
\end{proof}

\subsection{Gluing BR-matchings}

Let $X$ be a GCG and let $M$ be a $BR$-matching contained in the interior of $X$.
In this section, we show how to glue the segments of $M$ to the boundary of $X$. In this way, we obtain a GCG $G$ such that the endpoints of the segments of $M$ are all reflex vertices of $G$. 
Thus, by Lemma~\ref{lemma:ReflexMatching}, we can obtain a $BR$-matching $M'$ of the reflex vertices of $G$ where every segment is contained in $\mathcal P_G$, i.e. we can obtain a $BR$-matching $M'$ whose union with $M$ contains no crossings.

Assume that the vertices of $X$ and the endpoints of $M$ are in general position and that no two points have the same $x$-coordinate.

Let $s$ be the segment with the rightmost endpoint among all segments of $M$.
We may assume that the left (\emph{resp.} right) endpoint of $s$ is blue (\emph{resp.} red) and hence, that $s$ is blue (\emph{resp.} red) when viewed from below (\emph{resp.} above).

Extend $s$ to the right until it reaches the interior of a segment $s'$ on $\partial X$ at a point $y$.
Depending on the color of $s'$ when viewed from $s$, choose a point $y'$ in the interior of $s'$ above (\emph{resp.} below) $y$ if $s'$ is red (\emph{resp.} blue). 
Choose $y'$ sufficiently close to $y$ so that the whole segment $s$ is visible from $y'$. 
This is always possible because $y$ is visible from the right endpoint of $s$.
Let $m$ be the midpoint of $s$ and note that $m$ and $y'$ are \cvisible by construction. 
Let $X' = \textsc{Glue}(X, m, y')$ and note that by Lemma~\ref{lemma:Closed Under GlueCut}, $X'$ is a GCG. Moreover, the endpoints of $s$ become reflex vertices of $X'$; see Fig.~\ref{fig:GluingTheSegments}. 
Remove $s$ from $M$, let $X = X'$ and repeat this construction recursively until $M$ is empty. We obtain the following result.

\begin{figure}[htb]
\centering
\includegraphics{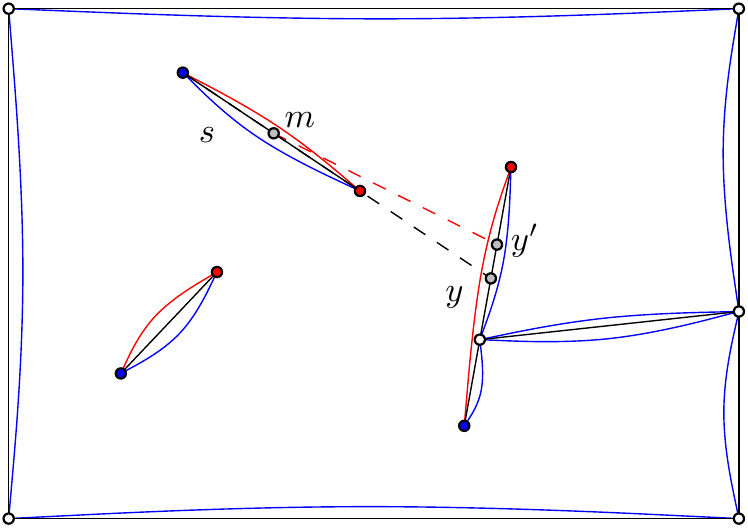}
\caption{Gluing a bichromatic segment with the boundary of a GCG.}
\label{fig:GluingTheSegments}
\end{figure}

\begin{lemma}\label{lemma:Gluing the segments}
Let $X$ be a GCG and $M$ be a $BR$-matching contained in the interior of $X$. 
There is a GCG $G$ augmenting $X$ such that all reflex vertices of $X$ and all endpoints in $M$ are reflex vertices in $G$.
Moreover, every segment of $M$ is contained in $\partial G$.
\end{lemma}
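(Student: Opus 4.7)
The plan is to formalize by induction on $|M|$ the construction sketched just above the lemma. For the base case $|M|=0$, take $G=X$ and there is nothing to do.

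For the inductive step, let $s=[a,b]\in M$ be the segment whose rightmost endpoint has the largest $x$-coordinate. Without loss of generality $b$ is its red right endpoint and $a$ is its blue left endpoint, so the midpoint $m$ of $s$ is red when viewed from above and blue when viewed from below. Because $M$ lies in the interior of $X$ and $b$ is the rightmost endpoint of any segment of $M$, the horizontal ray leaving $b$ to the right meets $\partial X$ at a first point $y$ lying in the interior of a unique boundary edge $s'$, and the open segment $(b,y)$ is contained in the interior of $X$ and disjoint from $M$. Depending on the color of $s'$ when viewed from $s$, pick $y'$ in the interior of $s'$ slightly above $y$ if $s'$ is red (resp.\ below if blue), close enough to $y$ that $y'$ stays strictly interior to $s'$ and $[m,y']$ is contained in the interior of $X$ and disjoint from $M\setminus\{s\}$. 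This choice of side is exactly what forces the color of $m$ viewed from $y'$ to coincide with the color of $y'$ viewed from $m$, so $m$ and $y'$ are \cvisible.

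Set $X' := \textsc{Glue}(X, m, y')$. By Lemma~\ref{lemma:Closed Under GlueCut}, $X'$ is a GCG; the endpoints $a$ and $b$ become reflex vertices of degree one in $X'$, and every reflex vertex of $X$ remains reflex in $X'$. The remaining matching $M\setminus\{s\}$ is still contained in the interior of $X'$, since its segments do not cross $s$ (non-crossing property of $M$) nor $[m,y']$ (by the visibility guarantee). Applying the induction hypothesis to $X'$ and $M\setminus\{s\}$ yields a GCG $G$ augmenting $X'$ in which all reflex vertices of $X'$ and all endpoints of $M\setminus\{s\}$ are reflex, and all segments of $M\setminus\{s\}$ lie on $\partial G$. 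Since $s$ is already on $\partial X'\subseteq\partial G$ and its endpoints are already reflex in $X'$, the GCG $G$ meets the conclusion of the lemma.

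The main obstacle is the ``sufficiently close'' step for $y'$: one must simultaneously secure \cvisibility, keep $y'$ strictly interior to $s'$, and ensure that $[m,y']$ avoids every other segment of $M$. All three properties hold by a standard continuity argument, because at $y'=y$ the segment $[m,y']$ degenerates to $[m,y]\subseteq s\cup(b,y)$, which is already interior to $X$ and disjoint from $M\setminus\{s\}$; the orientation-based side-of-$y$ rule then aligns the two local colorings, giving \cvisibility.
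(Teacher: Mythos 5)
Your proof is the paper's own construction (given in the paragraphs preceding the lemma) packaged cleanly as an induction on $|M|$: pick the segment with the rightmost endpoint, shoot a ray to the right to find an attachment point $y'$ on $\partial X$ on the correct side of $y$, apply \textsc{Glue} at the midpoint, and recurse. The inductive bookkeeping (reflex vertices of $X'$ persist, $M\setminus\{s\}$ stays interior to $X'$) is correct.

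There is, however, one concrete slip: you define $y$ as the first hit of the \emph{horizontal} ray leaving $b$, whereas the argument you then make requires $y$ to be the first hit of the ray extending $s$ itself (collinear with $s$), which is what the paper uses. Your key justification --- that at $y'=y$ the segment $[m,y']$ degenerates to $[m,y]\subseteq s\cup(b,y)$ --- is false for the horizontal ray unless $s$ happens to be horizontal. Worse, with the horizontal ray the claim that $[m,y']$ avoids $M\setminus\{s\}$ can actually fail: although every other segment of $M$ lies entirely in the halfplane $x<x(b)$, such a segment may still occupy the vertical strip between $x(m)$ and $x(b)$ and cross the chord $[m,y']$ without crossing $s$ (e.g.\ a near-horizontal segment tucked just to the right of a steep $s$, between $m$ and $b$). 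In that case $m$ and $y'$ are not even visible and \textsc{Glue} cannot be invoked. The fix is purely local: take $y$ on the line through $s$, to the right of $b$; then $[a,y]=s\cup[b,y]$ avoids all other segments and lies in the interior of $X$ except for $y$, so a thin-sliver continuity argument gives that all of $s$ is visible from a nearby $y'$, and your side-of-$y$ rule yields \cvisibility exactly as you describe. With that correction the proof is complete and coincides with the paper's.
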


\section{Augmented matchings}\label{Section:Avoiding Chromatic Cuts}
In this section, we provide the proof of Lemma~\ref{lemma:CompatibleBRMatching} presented in Section~\ref{Section:Ham-Sandwich}.

Let $M$ be a $BR$-matching of $P$ and let $\ell$ be a chromatic cut of $M$. Recall that $S_{M,\ell}$ denotes the set of segments of $M$ that cross $\ell$.
We show that it is possible to obtain a new $BR$-matching $M'$ with at least one segment $s$ of $S_{M,\ell}$ absent.
Furthermore, when examining segments of $M$ that cross $\ell$ below $s$, all segments of $S_{M,\ell}$ are preserved in $M'$ and no new segments are introduced.

A vertex $v$ of a GCG $X$ is \emph{isolated} if no line through $v$, intersecting the interior of $X$, supports a closed halfplane containing all the neighbors of $v$. The following observation is depicted in Fig.~\ref{fig:IsolatedVertex}.

\begin{observation}\label{obs:ConvexVertex}
If $v$ is an isolated vertex of a GCG $X$, then $v$ lies outside of $\mathcal P_X$. 
Moreover, if $v'$ is a vertex of $X$ lying outside $\mathcal P_X$, then the open segments joining $v'$ with its neighbors also lie outside of $\mathcal P_X$.

\end{observation}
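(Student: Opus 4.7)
My plan is to prove the two statements in sequence, using a shared preliminary fact about where $\mathcal P_F$ touches $\partial F$. The preliminary fact I would first establish is: for every face $F$ of $X$, the set $\mathcal P_F \cap \partial F$ consists of the reflex vertices of $F$ together with those closed edges of $F$ whose two endpoints are both reflex in $F$. This is immediate from the construction of $\mathcal P_F$ --- at each convex vertex $v_i$ the point $x_i$ is displaced by $\varepsilon$ into $int(F)$, so any edge $x_i x_{i+1}$ of $\mathcal P_F$ with at least one convex endpoint strictly enters $int(F)$. A direct consequence is that a vertex $v$ of $X$ belongs to $\mathcal P_X$ if and only if $v$ is reflex in some face of $X$.

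For the first claim I would argue the contrapositive: if $v \in \mathcal P_X$, then $v$ is not isolated. By the preliminary fact, $v$ is reflex in some face $F$; since $X$ is a GCG, $v$ has degree one with unique neighbor $w$. The edge $vw$ is pendant in $F$, so $int(F)$ wraps around $v$ on both sides of $vw$. Tilting the line supporting $vw$ slightly about $v$ produces a line $L$ through $v$ that enters $int(F) \subseteq int(X)$ while keeping $w$ strictly on one chosen side. The closed halfplane on that side contains every neighbor of $v$, so $v$ is not isolated.

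For the second claim, let $v'$ be a vertex outside $\mathcal P_X$ (so $v'$ is not reflex in any face) and let $w$ be any neighbor. I would show $(v', w) \cap \mathcal P_F = \emptyset$ for every bounded face $F$ and then sum. If $F$ is not incident to $v'w$, then $(v', w) \cap F = \emptyset$ and we are done. Otherwise $F$ is one of the (at most two) faces incident to $v'w$, and the open segment $(v', w)$ lies in $\partial F$ and contains no vertex of $X$. By the preliminary fact, $(v', w)$ could meet $\mathcal P_F$ only by lying along an edge of $\mathcal P_F$ joining two consecutive reflex vertices of $F$ --- but this would force $v'$ to be reflex in $F$, contradicting our hypothesis. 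Hence $(v', w) \cap \mathcal P_F = \emptyset$ for every $F$, so $(v', w)$ lies outside $\mathcal P_X$.

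The only delicate step will be the preliminary bookkeeping of $\mathcal P_F \cap \partial F$, in particular the observation that an edge of $\mathcal P_F$ can coincide with an entire edge of $\partial F$ only when both endpoints are reflex in $F$. Once this is in hand, the first claim reduces to a local perturbation argument at a reflex degree-one vertex, and the second to a face-by-face check that exploits the failure of $v'$ to be reflex in any face.
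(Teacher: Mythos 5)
The paper offers no proof of this observation---it is asserted with only a pointer to Fig.~\ref{fig:IsolatedVertex}---so your proposal must stand on its own, and its load-bearing ``preliminary fact'' (in particular the consequence that a vertex of $X$ lies in $\mathcal P_X$ if and only if it is reflex in some face) is false for precisely the GCGs this paper constructs. After $\textsc{Glue}(G,z,z')$ the attachment point $z$ is a non-reflex vertex of degree three whose neighbors $a,b$ on the glued segment are reflex and collinear with $z$. In the boundary walk of the face, the occurrence of $z$ between $a$ and $b$ has a collinear triple and is therefore skipped by the simplification, so $[a,b]$ is a single edge of $\mathcal P_F$ and $z$, together with the open edges $(z,a)$ and $(z,b)$, lies on $\mathcal P_F$ even though $z$ is not reflex. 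More generally, an edge of $\mathcal P_F$ joining two consecutive reflex vertices need not be an edge of $F$: it can be a chain of collinear edges of $F$ through non-reflex vertices. Your contrapositive for the first claim therefore covers only reflex vertices and leaves vertices like $z$ unhandled; for those a separate argument is needed (e.g., the line through the two collinear walk-neighbors of such a vertex keeps all its remaining neighbors in one closed halfplane, since the two collinear edges are rotationally consecutive at that vertex, and this line enters $int(F)$ just beyond the reflex endpoints of the polygon edge, so the vertex is not isolated).

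The second claim can be rescued, but not for the reason you give: if $(v',w)$ overlapped an edge of $\mathcal P_F$ lying on $\partial F$, the conclusion is not that $v'$ is reflex but that $(v',w)\subseteq[r,r']$ for the reflex endpoints $r,r'$ of that polygon edge (because $r,r'$ are vertices of $X$ and the open edge $(v',w)$ contains no vertex), whence $v'\in\mathcal P_F$, contradicting the hypothesis directly. You should also justify that an edge contained in $\partial F$ cannot meet the interior of the region $\mathcal P_F$ (a bridge edge such as $[z,z']$ is flanked on both sides by the displaced polygon and sits in the exterior component); your bookkeeping of $\mathcal P_F\cap\partial F$ assumes this without argument.
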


\begin{figure}[htb]
\centering
\includegraphics{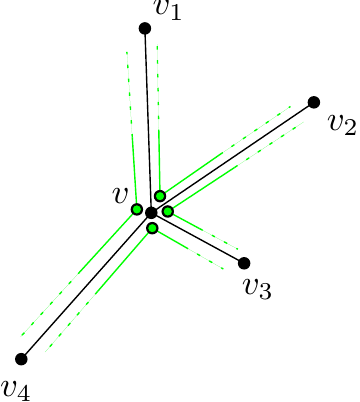}
\caption{An isolated vertex $v$ lying outside of the simplifications of each of its adjacent faces.}
\label{fig:IsolatedVertex}
\end{figure}

Let $\ell$ be a chromatic cut of $M$ and assume that  $S_{M,\ell}= \{s_1, \ldots, s_k\}$ is sorted from bottom to top according to the intersection, $x_i$, of $s_i$ with $\ell$.

The idea of the proof is to construct a GCG $X$ augmenting $M$, using the \textsc{GlueCut} operation and then Lemma~\ref{lemma:Gluing the segments}, in such a way that $x_1, \ldots, x_j$ become isolated vertices of $X$ for some $1\leq j\leq k$. Furthermore, we require $X$ to contain the edge between $x_i$ and $x_{i+1}$ for every $1\leq i < j$.
By Observation~\ref{obs:ConvexVertex}, these edges will lie outside of $\mathcal P_X$ and so will the portion of $\ell$ lying below $s_j$. Thus, this portion of $\ell$ will not be crossed by a $BR$-matching, compatible with $M$, obtained from Lemma~\ref{lemma:ReflexMatching} when applied on $X$. 

Notice that we can only glue $x_i$ with $x_{i+1}$ if they are \cvisible. Although the following lemma shows that there is at least one pair of consecutive \cvisible points among $x_1, \ldots, x_k$, we may not be able to glue all of them. Thus, we will resort to a different strategy that allows us to alter the color of a segment.

\begin{lemma}\label{lemma:ChangeInColor}
There exist two consecutive segments $s_i$ and  $s_{i+1}$ in $S_{M,\ell}$ such that $x_i$ and $x_{i+1}$ are \cvisible.
\end{lemma}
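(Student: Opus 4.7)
The plan is to reduce c-visibility between $x_i$ and $x_{i+1}$ to a simple combinatorial condition on which side of $\ell$ the blue (equivalently, red) endpoints of $s_i$ and $s_{i+1}$ lie, and then to invoke the hypothesis that $\ell$ is a chromatic cut to produce the desired adjacent pair via a pigeonhole / change-of-sign argument.

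First I would verify visibility. Since $s_i$ and $s_{i+1}$ are consecutive in the order along $\ell$, the relatively open segment $(x_i, x_{i+1}) \subset \ell$ is crossed by no segment of $M$ (any other segment of $M$ has both endpoints on the same side of $\ell$ and hence cannot meet $\ell$, and no point of $P$ lies on $\ell$). Thus $x_i$ and $x_{i+1}$ are visible.

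Next I would translate the color definition into a geometric condition. Assume without loss of generality that $\ell$ is vertical and that $x_{i+1}$ lies above $x_i$. Write $s_i = [r_i, b_i]$ and $s_{i+1} = [r_{i+1}, b_{i+1}]$. By definition, $x_i$ viewed from $x_{i+1}$ is blue iff the triple $(x_{i+1}, x_i, b_i)$ makes a left turn. Going downward along $\ell$ from $x_{i+1}$ to $x_i$, a left turn towards $b_i$ is equivalent to $b_i$ lying strictly to the right of $\ell$. Symmetrically, $x_{i+1}$ viewed from $x_i$ is blue iff $b_{i+1}$ lies strictly to the left of $\ell$. Hence $x_i$ and $x_{i+1}$ are c-visible iff $b_i$ and $b_{i+1}$ lie on opposite sides of $\ell$ (equivalently, $r_i$ and $r_{i+1}$ lie on opposite sides of $\ell$). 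Intuitively, consecutive crossing segments are c-visible iff they are oriented oppositely relative to $\ell$.

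Finally I would use the chromatic-cut hypothesis. Encode each $s_j$ by the side $\sigma_j \in \{L, R\}$ on which its blue endpoint lies. Because $\ell$ is a chromatic cut, the endpoints of $S_{M,\ell}$ on (say) the right side are not all the same color, i.e. there is at least one $j$ with $\sigma_j = R$ and at least one $j'$ with $\sigma_{j'} = L$. Thus the word $\sigma_1 \sigma_2 \cdots \sigma_k$ contains both letters, so there must exist an index $i$ with $\sigma_i \ne \sigma_{i+1}$. By the characterization above, $x_i$ and $x_{i+1}$ are c-visible for this $i$, proving the lemma.

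The only step requiring care is the left-turn bookkeeping in paragraph two; everything else is essentially immediate from the hypothesis. I do not foresee a significant obstacle.
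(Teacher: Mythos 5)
Your proof is correct and follows essentially the same route as the paper's: use the chromatic-cut hypothesis to find two consecutive crossing segments whose left (equivalently, blue) endpoints lie on opposite sides of $\ell$, and observe that consecutiveness gives visibility while the opposite orientation gives matching colors. You merely make explicit the left-turn bookkeeping that the paper states without computation.
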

\begin{proof}
Because $\ell$ is a vertical chromatic cut, there exist two segments $s_j$ and $s_h$ in $S_{M,\ell}$ such that the left endpoint of $s_j$ is of different color than the left endpoint of $s_h$.
Therefore, two consecutive segments $s_i$ and $s_{i+1}$ must exist having left endpoints of different color.
This implies that the color of $s_i$ when viewed from above is the same as the color of $s_{i+1}$ when viewed from below. Finally, since $s_i$ and $s_{i+1}$ are consecutive segments in $S_{M,\ell}$, $x_i$ and $x_{i+1}$ are visible.
\end{proof}


\begin{figure}[h!tb]
\centering
\includegraphics{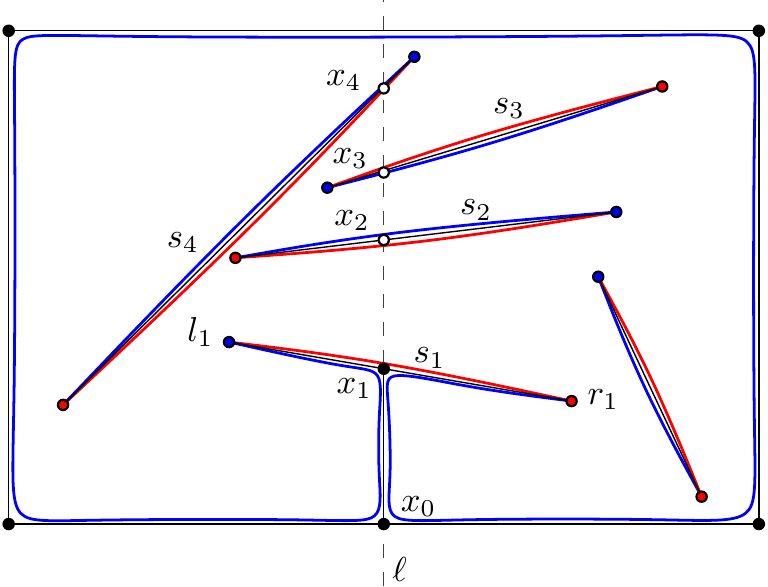}
\caption{Case where there is at least one point, lying above $s_1$, on segments $s_2,s_L$ or $s_R$ that is \cvisible with $x_1$.}
\label{fig:InputPaste}
\end{figure}

We proceed to describe the construction of the GCG that augments $M$.
Let $\mathcal R$ be a convex polygon strictly containing all segments of $M$.
Assume without loss of generality that the left endpoint of $s_1$ is blue, implying that $s_1$ is red from above and blue from below.
Let $x_0$ be the bottom intersection between $\ell$ and $\mathcal R$.
Since the bounded face of $\mathcal R$ contains no reflex vertex, we can assume that $x_0$ is blue when viewed from $x_1$. That is, $x_0$ and $x_1$ are \cvisible. 
Finally, let $X_1 = \textsc{Glue}(\mathcal R, x_1, x_0)$ be the GCG obtained by joining $x_0$ with $x_1$; see Fig.~\ref{fig:InputPaste}.

Consider the edge of $\mathcal R$ containing $x_0$ to be a segment $s_0$. 
The following invariants on the GCG $X_i$ hold initially for $i = 1$ and are maintained through a sequence of iterations.
 \begin{itemize}
\item The points $x_i$ and $x_{i+1}$ are visible while $x_i$ and $x_{i-1}$ are neighbors.
\item Besides $x_{i-1}$, vertex $x_i$ neighbors two vertices on $s_i$, one to the left and one to the right of $\ell$.
\item The endpoints of $s_i$ are reflex vertices of $X_i$.
\item The endpoints of $s_{i-1}$ are not reflex in $X_i$ and $x_{i-1}$ is an isolated vertex.
\item The color of $s_i$, when viewed from a point lying above $s_i$, is given by the color of the right endpoint of $s_i$.
\end{itemize}

Our objective is to find a point, \cvisible with $x_i$, that lies above the line extending $s_i$. If such a point exists, by gluing it to $x_i$ and then merging the remaining segments of $M$ using Lemma~\ref{lemma:Gluing the segments}, we obtain a GCG $X$ that augments $M$ with the desired properties (a full explanation is presented in Section~\ref{section:Processing after augment}, an example is depicted in Fig.~\ref{fig:InputPaste}).

As long as no such point exists, we iteratively augment $X_i$, maintaining the above properties as an invariant.
This is done with procedure $\textsc{Augment}(i)$, which takes a GCG $X_i$ and adds edges (including the edge between $x_i$ and $x_{i+1}$) to produce a new GCG $X_{i+1}$ where the above properties hold. 
After several augmentations, we will produce a GCG where the desired \cvisible point will be found.

\subsubsection*{Procedure \textsc{Augment}$(i)$}
Refer to Fig.~\ref{fig:Augment} for an illustration of this procedure.
Let $l_i$ and $r_i$ be the left and right endpoints of $s_i$, respectively.
Assume without loss of generality that $l_i$ is colored blue (and $r_i$ is red). Thus, $s_i$ is red when viewed from above.
Extend $s_i$ on both sides and let $s_L$ (\emph{resp.} $s_R$) be the first segment reached to the left (\emph{resp.} right).
This procedure is only used when the points in $s_{i+1}$, $s_L$ and $s_R$ appear blue when viewed from $x_i$. 
Otherwise, \textsc{Augment} is not required as there is a point in either $s_{i+1}$, $s_L$ or $s_R$, lying above $s_i$, that is \cvisible with $x_i$.

Notice that $s_{i+1}$, $s_L$, and $s_R$ could belong either to $M$, or to $\partial X_i$.
Let $y_L$ and $y_R$ be the points where the line extending $s_i$ intersects $s_L$ and $s_R$, respectively.
Let $X_i'$ be the PSLG obtained by adding the edges $[l_i,y_L]$ and $[r_i,y_R]$ to $X_i$ ($y_L$ and $y_R$ are added as vertices).

\begin{figure}[h!tb]
\centering
\includegraphics{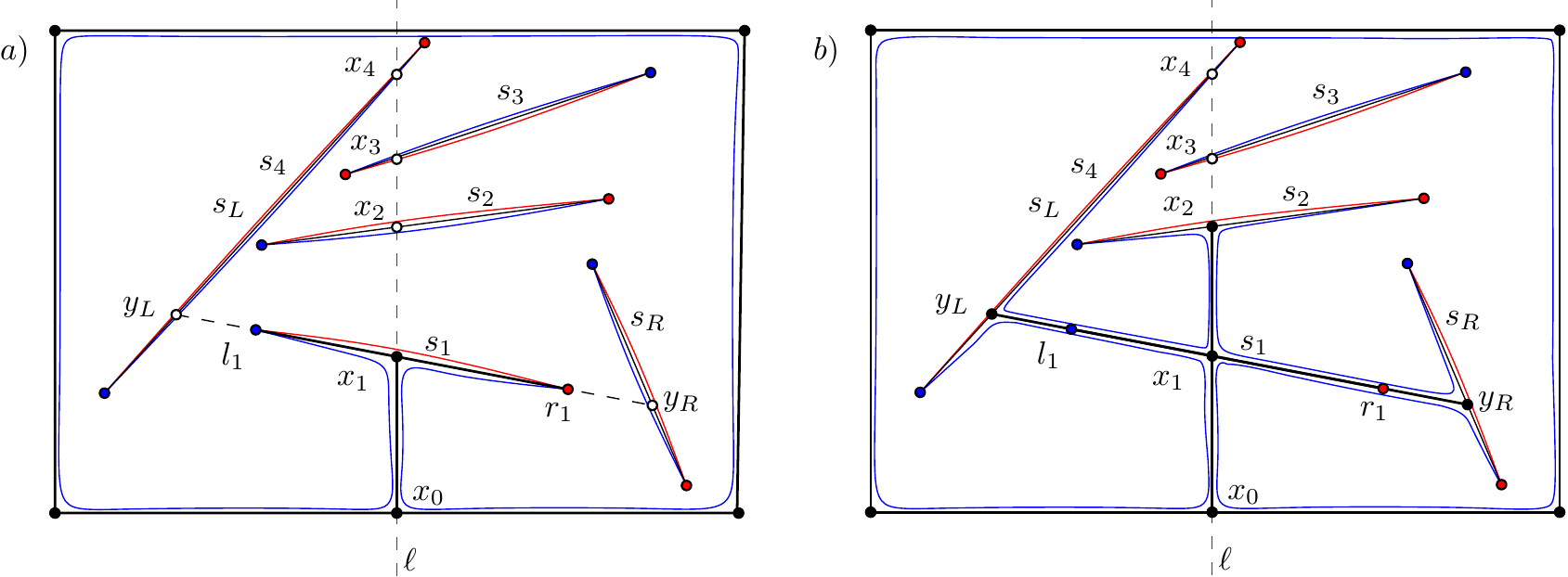}
\caption{$a)$ Example where procedure $\textsc{Augment}(1)$ is required.  Points above $s_1$ that lie on segments $s_2$, $s_L$ and $s_R$ are not \cvisible with $x_1$.
$b)$ The construction obtained by extending $s_1$, where two reflex vertices $l_1,r_1$ disappear to let $x_1$ and $x_2$ become \cvisible.}
\label{fig:Augment}
\end{figure}

This may create new faces depending on whether $s_L$ or $s_R$ belong to $M$. 
Vertices $y_L, l_i, x_i, r_i, y_R$ are collinear, meaning $l_i$ and $r_i$ are no longer reflex vertices in $X_i'$. Thus the color of $s_i$ will now be blue when viewed from above or from below.
Furthermore, if $s_L$ or $s_R$ belong to $M$, then their endpoints are now reflex vertices of $X_i'$.
One can verify that $X_i'$ is well-colored since $y_L$ and $y_R$ are both blue when viewed from $x_i$, hence $X_i'$ is a GCG. See Fig.~\ref{fig:Augment}$(b)$ for an illustration.
Notice that, when viewed from above, the color of 
$x_i$ is now blue, in contrast with the red color that $x_i$ had on $X_i$. 
Therefore, since $x_{i+1}$ is blue when viewed from below, $x_{i+1}$ and $x_i$ are now \cvisible in $X_i'$ and can be glued. 

Let $X_{i+1} = \textsc{GlueCut}(X_i', x_{i+1}, x_i)$. 
This way, the endpoints of $s_{i+1}$ become (if they were not already) reflex vertices of the GCG $X_{i+1}$ and $x_i$ becomes an isolated vertex. 
Notice that no vertex on $s_{i+1}$ neighbors a point lying above $s_{i+1}$. 
Therefore, the color of every point on segment $s_{i+1}$, when viewed from above, is given by the color of the right endpoint of $s_{i+1}$.
In fact, the invariant properties are maintained, should there be a subsequent use of $\textsc{Augment}$.

\subsection{Analysis of  AUGMENT}

\begin{observation}\label{obs:ReflexRemain}
On each iteration of $\textsc{Augment}$, all reflex vertices of $X_i$ are preserved in $X_{i+1}$, except for the two endpoints of $s_i$  that become non-reflex. 
\end{observation}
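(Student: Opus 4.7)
The plan is a purely local analysis. The main structural observation is that $\textsc{Augment}(i)$ only adds edges to $X_i$: no edge is ever removed, and whenever an existing edge of $X_i$ is subdivided (for instance when $y_L$ is inserted on an edge of $\partial X_i$), the two resulting sub-edges leave their common endpoint along the same tangent direction as the original edge. Consequently, for any vertex $v$ of $X_i$ outside $\{l_i, r_i, x_i, y_L, y_R\}$, the arrangement of edges of $X_{i+1}$ in a sufficiently small disk around $v$ coincides with that of $X_i$. This locality handles one half of the claim: if $v$ was reflex in some face $F$ of $X_i$, the witnessing non-convex wedge at $v$ persists in $X_{i+1}$, since $F$ may be split by newly added edges but the wedge lies entirely in one of the resulting subfaces (reflexivity is certified by arbitrarily small disks centered at $v$).

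Next I would verify that $l_i$ and $r_i$ lose reflexivity. By the invariants carried into $\textsc{Augment}(i)$, in $X_i$ the vertex $l_i$ has degree one, with its unique incident edge going to $x_i$, making $l_i$ reflex in the face of $X_i$ it dangles into, because a small disk minus a single radius segment is non-convex. During the procedure the edge $[l_i, y_L]$ is added, where $y_L$ lies on the extension of $s_i$ past $l_i$. Hence in $X_{i+1}$ the vertex $l_i$ has precisely two neighbors, $x_i$ and $y_L$, both collinear with $l_i$ along the supporting line of $s_i$; the two edges at $l_i$ cut any small disk into two half-disks, each with interior angle exactly $180^\circ$, so $l_i$ is not reflex in any face of $X_{i+1}$. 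The argument for $r_i$ is symmetric via the edge $[r_i, y_R]$.

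To finish, I would dispatch the remaining candidates $x_i, y_L, y_R$: these are either brand new vertices in $X_{i+1}$ and hence vacuously outside the scope of the claim, or, in the case of $x_i$, a vertex that was not reflex in $X_i$, since its three neighbors $x_{i-1}, l_i, r_i$ form a T-junction with all interior angles at $x_i$ at most $180^\circ$, and which remains non-reflex after adding the edge $[x_i, x_{i+1}]$ along $\ell$ (using that no segment of $M$ is parallel to $\ell$, so the four resulting wedges at $x_i$ are all strictly smaller than $180^\circ$). The main potential subtlety is the case where $s_L$ or $s_R$ belongs to $M$ rather than to $\partial X_i$, since then the procedure implicitly incorporates these segments into $X_{i+1}$ and their endpoints become new reflex vertices; however, such endpoints are \emph{new} reflex vertices of $X_{i+1}$, never lost or altered from $X_i$, so they do not conflict with the statement of the observation.
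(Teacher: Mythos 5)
Your proof is correct and follows the same reasoning the paper leaves implicit: the observation is stated without proof, its justification being the collinearity of $y_L, l_i, x_i, r_i, y_R$ noted in the description of $\textsc{Augment}$ together with the fact that the procedure only adds (and subdivides) edges, which is exactly the locality-plus-collinearity argument you spell out. One negligible imprecision: if $s_i$ entered the graph in an earlier iteration as some $s_L$ or $s_R$, the unique edge at $l_i$ may go to an interior point of $s_i$ rather than to $x_i$, but that edge still lies along the supporting line of $s_i$, so your collinearity conclusion (and hence the loss of reflexivity at $l_i$ and $r_i$) is unaffected.
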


\begin{lemma}\label{lemma:AugmentFinishes}
The procedure $\textsc{Augment}$
will only be used $O(n)$ times before producing a GCG $X_j$, where there exists a point, lying above the segment $s_j$, that is \cvisible with $x_j$ (for some $1\leq j \leq k-1$).
\end{lemma}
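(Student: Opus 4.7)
The plan is to prove the bound by a simple index-monotonicity argument and then invoke Lemma~\ref{lemma:ChangeInColor} to guarantee termination before the index is exhausted. Observe that each call to $\textsc{Augment}(i)$ produces a GCG $X_{i+1}$ in which $x_i$ becomes isolated and the working segment advances from $s_i$ to $s_{i+1}$; in particular, the index of the segment currently being processed strictly increases by one with every invocation. Since $S_{M,\ell} = \{s_1,\dots,s_k\}$ contains at most $k \leq |M| = n$ segments, after at most $k-1$ invocations the procedure would be forced to process $s_k$, for which $s_{k+1}$ does not exist. Consequently the total number of invocations is trivially bounded by $k-1 = O(n)$, which is the counting portion of the claim.

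The substantive part is to show that the procedure actually halts at some $j\le k-1$ with a \cvisible point above $s_j$, rather than simply running out of segments. For this I would invoke Lemma~\ref{lemma:ChangeInColor}: applied to the original matching $M$, it yields some index $i^*\in\{1,\dots,k-1\}$ for which the endpoints of $s_{i^*}$ and $s_{i^*+1}$ on one side of $\ell$ have distinct colors, so that $x_{i^*}$ and $x_{i^*+1}$ are \cvisible under the coloring induced by $M$. If the procedure has not already terminated at some iteration $j<i^*$, then it reaches $X_{i^*}$, and I would argue that the \cvisibility of $x_{i^*}$ with $x_{i^*+1}$ persists in $X_{i^*}$. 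The color of $x_{i^*}$ when viewed from above is controlled by the last listed invariant: it equals the color of $r_{i^*}$, exactly as in $M$. Similarly, since $s_{i^*+1}$ has not yet been processed and any edges added by previous augmentations attach to $s_{i^*+1}$ only at points lying strictly to one side of $\ell$, the color of $x_{i^*+1}$ viewed from below remains the color of $l_{i^*+1}$. The matching of these two colors, guaranteed by Lemma~\ref{lemma:ChangeInColor}, implies that $x_{i^*+1}$ is a \cvisible point above $s_{i^*}$, so the procedure terminates at $j=i^*\le k-1$.

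The delicate step I expect to be the main obstacle is the last one: verifying that previous applications of $\textsc{Augment}$ have not altered the color of $x_{i^*+1}$ when viewed from below. An earlier call $\textsc{Augment}(i')$ with $i'<i^*$ may add a new vertex on $s_{i^*+1}$ (if $s_{i^*+1}$ happens to serve as $s_L$ or $s_R$ for $s_{i'}$) together with an edge running from that vertex toward $l_{i'}$ or $r_{i'}$. I would handle this by unwinding the definition of the PSLG coloring around $x_{i^*+1}$ in $X_{i^*}$: the counterclockwise walk along the boundary starting just below $x_{i^*+1}$ reaches the same reflex vertex (namely $l_{i^*+1}$) as in $M$, because any such auxiliary edge lies along the line of $s_{i'}$ and is therefore encountered only after leaving the immediate vicinity of $x_{i^*+1}$, on the appropriate side of $\ell$. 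This preserves the color of $x_{i^*+1}$ from below, closes the argument, and yields $j \leq i^* \leq k-1 = O(n)$ as required.
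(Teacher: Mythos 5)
Your proof is correct and follows essentially the same route as the paper's: both reduce termination to Lemma~\ref{lemma:ChangeInColor}, which supplies a pair of consecutive crossing segments whose intersection points with $\ell$ are \cvisible, and then argue via the maintained invariants that this \cvisibility survives all earlier calls to $\textsc{Augment}$, so the procedure must halt by that index, which is at most $k-1 = O(n)$. Your explicit discussion of why earlier augmentations cannot change the color of $x_{i^*+1}$ viewed from below is a detail the paper leaves implicit, but it is the same argument.
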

\begin{proof}
By Lemma~\ref{lemma:ChangeInColor}, there exist segments $s_h, s_{h+1}\in S_{M,\ell}$ such that $x_h$ and $x_{h+1}$ are \cvisible before executing $\textsc{Augment}$ on $X_1$. 
We claim that $\textsc{Augment}$ can only go as far as to construct $X_h$.
If $X_h$ is not constructed, it is because a GCG $X_j$ was constructed (for some $0\leq j<h$), where there exists a point, lying above the segment $s_j$, that is \cvisible with $x_j$.
Otherwise, if $X_h$ is constructed, then, by the preserved invariants, 
the color of $x_h$, when viewed from above, remains unchanged and hence $x_h$ and $x_{h+1}$ are \cvisible.
\end{proof}

\begin{figure}[h!tb]
\centering
\includegraphics{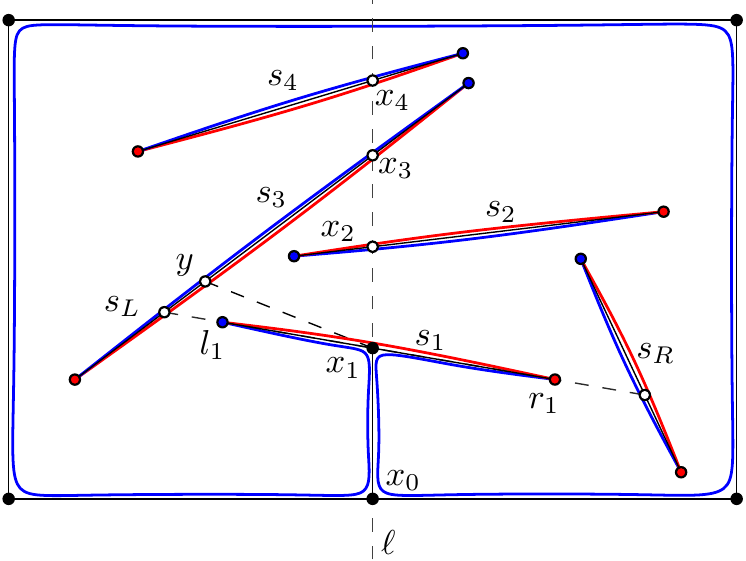}
\caption{Case where
$x_1$ and $x_2$ are not \cvisible, but a point $y$ can be found in $s_L$ so that $x_1$ and $y$ are \cvisible.}
\label{fig:AfterAugment}
\end{figure}

\subsection{Processing after  AUGMENT}\label{section:Processing after augment}

By Lemma~\ref{lemma:AugmentFinishes}, we know that after the last call to $\textsc{Augment}$ we obtain a GCG $X_j$ such that there is a point in either $s_{j+1}$, $s_L$ or $s_R$, lying above $s_j$, that is \cvisible with $x_j$. Assume without loss of generality that $x_j$ is red when viewed from above.
If $s_{j+1}$ is red when viewed from below, then $x_j$ and $x_{j+1}$ are \cvisible. 
In this case, we define $G_{M,\ell}=\textsc{GlueCut}(X_j, x_{j+1}, x_j)$.

Instead, if $x_{j+1}$ is blue when viewed from $x_j$, we follow a different approach.
Recall that the endpoints of $s_j$ are reflex vertices.
If $s_L$ is red when viewed from the left endpoint of $s_j$, choose a point $y$, slightly above $y_L$ on $s_L$, such that the whole segment $s_j$ is visible from $y$. 
Since $x_j$ is red when viewed from above, $x_j$ and $y$ are \cvisible. 
Let $G_{M,\ell}= \textsc{GlueCut}(X_j, y, x_j)$; see Fig.~\ref{fig:AfterAugment}. 
An analogous construction of $G_{M,\ell}$ follows if $s_R$ is red when viewed from the right endpoint of $s_j$. We call $G_{M,\ell}$ the \emph{extension of $X_j$}.

\begin{lemma}\label{lemma:ConstructionProperties}
If $G_{M,\ell}$ is an extension of $X_j$, then the following properties hold:

\begin{itemize}
\item The endpoints of $s_j$ are reflex vertices of $G_{M,\ell}$, but $s_j$ is not contained in $\mathcal P_{G_{M,\ell}}$.
\item The downwards ray with apex at $x_j$ does not intersect $\mathcal P_{G_{M,\ell}}$. 
\item For every $1\leq i< j$, the endpoints of $s_i$ are not reflex vertices of $G_{M,\ell}$. Moreover, $s_i$ is not contained in $\mathcal P_{G_{M,\ell}}$.
\end{itemize}
\end{lemma}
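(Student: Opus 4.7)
The plan is to read off the three properties from the invariants maintained by $\textsc{Augment}$ together with Observation~\ref{obs:ConvexVertex}. First I would note what $X_j$ looks like at the moment $\textsc{Augment}$ stops: by the listed invariants, the endpoints $l_j, r_j$ of $s_j$ are reflex, whereas for each $1 \le i < j$ the endpoints $l_i, r_i$ of $s_i$ were made non-reflex by the call $\textsc{Augment}(i)$ (when $s_i$ was extended collinearly into $s_L$ and $s_R$), and $x_0, x_1, \ldots, x_{j-1}$ are all isolated vertices lying on $\ell$, connected in a chain by the successive $\textsc{GlueCut}$ operations. I then need to check that the extension step that produces $G_{M,\ell}$ from $X_j$ preserves all of this: this step only adds a single edge from $x_j$ to a point strictly above the line through $s_j$ (either $x_{j+1}$, or a point $y$ slightly above $y_L$ or $y_R$ on $s_L$ or $s_R$). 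It neither extends $s_j$ nor touches any edge incident to $l_i, r_i$ for $i \le j$, so the reflex/non-reflex status of every $s_i$'s endpoints is unchanged, and $x_j$ now has a neighbor strictly above $\ell$-supported line through $s_j$ in addition to its neighbors on $s_j$ and below, which makes $x_j$ isolated.

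Once this picture is in place, Observation~\ref{obs:ConvexVertex} does all the geometric work. Apply it to each isolated vertex $x_0, x_1, \ldots, x_j$: each lies outside $\mathcal P_{G_{M,\ell}}$, and each of the open segments joining it to its neighbors does too. For Property~2, the downward ray from $x_j$ consists of the edges $[x_j, x_{j-1}], [x_{j-1}, x_{j-2}], \ldots, [x_1, x_0]$, each an edge between two isolated vertices and hence disjoint from $\mathcal P_{G_{M,\ell}}$, followed by the portion below $x_0$ which lies outside the outer polygon $\mathcal R$ and therefore outside $G_{M,\ell}$ altogether. For Property~1, $s_j$ is split at $x_j$ into $[l_j, x_j]$ and $[x_j, r_j]$; applying the observation to the isolated $x_j$ shows that the open interiors of both pieces lie outside $\mathcal P_{G_{M,\ell}}$, so although the endpoints $l_j, r_j$ are reflex and therefore belong to $\mathcal P_{G_{M,\ell}}$, the segment $s_j$ is not \emph{contained} in it. For Property~3, the same argument applied to each isolated $x_i$ with $i < j$ puts the open interiors of $[l_i, x_i]$ and $[x_i, r_i]$ outside $\mathcal P_{G_{M,\ell}}$; and since $l_i, r_i$ are non-reflex vertices of their incident faces, they are cut off by the $\varepsilon$-offset used to build the simplification and so lie outside $\mathcal P_{G_{M,\ell}}$ as well, giving the full non-containment.

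The main obstacle is the bookkeeping needed to confirm that the invariants indeed propagate all the way to $G_{M,\ell}$. The listed invariants only describe the step-by-step transition $X_i \to X_{i+1}$, so one must verify that no later call $\textsc{Augment}(i')$ with $i' > i$ modifies the neighborhood of $l_i, r_i, x_i$: this is true because $\textsc{Augment}(i')$ operates entirely on $s_{i'}$ together with the next-encountered segments $s_L, s_R$ above and to the sides of $s_{i'}$, never reaching back into the already-isolated chain below $x_{i'}$. Likewise one must check that the final extension step, which introduces only one new above-$s_j$ edge, cannot accidentally make a previously non-reflex vertex reflex or connect some $l_i$ to a new neighbor. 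Once these local checks are made, the three properties are immediate consequences of Observation~\ref{obs:ConvexVertex}.
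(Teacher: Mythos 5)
Your proposal is correct and follows essentially the same route as the paper: establish from the maintained invariants that $x_j$ (and each earlier $x_i$) is isolated, invoke Observation~\ref{obs:ConvexVertex} to place the $x_i$ and the chain of edges between them outside $\mathcal P_{G_{M,\ell}}$, and use the fact that \textsc{Augment} flattens the endpoints of each $s_i$ with $i<j$ (the paper cites Observation~\ref{obs:ReflexRemain} for this) for the third property. The extra bookkeeping you supply about the invariants propagating through later calls and through the extension step is detail the paper leaves implicit, not a different argument.
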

\begin{proof}
By the invariants of $\textsc{Augment}$, $x_j$ neighbors $x_{j-1}$ as well as two vertices on $s_j$, one to the left and one to the right of $\ell$.
Since $x_j$ also neighbors a vertex in $G_{M,\ell}$ lying above the segment $s_j$, $x_j$ is an isolated vertex in $G_{M,\ell}$.
Thus, by the preserved invariants and by Observation~\ref{obs:ConvexVertex},  for every $1\leq i\leq j$, $x_i$ lies outside of $\mathcal P_{G_{M,\ell}}$ and hence the segment $s_i$ is not contained in $\mathcal P_{G_{M,\ell}}$. Furthermore, the segment joining $x_i$ with $x_{i-1}$ also lies outside of $\mathcal P_{G_{M,\ell}}$ and so does the downwards ray with apex at $x_j$.
Finally, Observation~\ref{obs:ReflexRemain} tells us that, for every $1\leq i <j$, no endpoint of $s_i$ is a reflex vertex of $X_j$ (nor of $G_{M,\ell}$).
\end{proof}

We are now ready to provide the proof of Lemma~\ref{lemma:CompatibleBRMatching} which is restated for ease of readability.\\

\textsc{Lemma}~\ref{lemma:CompatibleBRMatching}.\emph{
Let $M$ be a $BR$-matching of $P$ and let $\ell$ be a chromatic cut of $M$.
There exists a $BR$-matching $M'$ of $P$, compatible with $M$, with the following properties. 
There is a segment $s$ of $M\setminus M'$ that crosses $\ell$ such that all segments of $M$ that cross $\ell$ below $s$  also belong to $M'$. Moreover, these are the only segments of $M'$ crossing $\ell$ below $s$.
}

\begin{proof}
Let $G_{M,\ell}$ be the GCG obtained using the construction presented in this section on $M$ and $\ell$.
Recall that $S_{M,\ell}$ is the set of segments of $M$ that cross $\ell$.
Lemma~\ref{lemma:ConstructionProperties} states that there is a segment $s_j\in S_{M,\ell}$, such that its endpoints are reflex vertices of $G_{M,\ell}$ but $s_j$ is not contained in $\mathcal P_{G_{M,\ell}}$.
Let $W$ be the set of segments in $M$ that are contained in the interior of $G_{M,\ell}$ and let $Z_\ell = \{s_1, \ldots, s_{j-1}\}$ be the set of segments of $S_{M,\ell}$ that cross $\ell$ below $x_j$. 
By Lemma~\ref{lemma:ConstructionProperties} we know that $Z_\ell\cap W = \emptyset$. 

By Lemma~\ref{lemma:Gluing the segments}, since $W$ is contained in the interior of $G_{M,\ell}$, we can augment $G_{M,\ell}$ by gluing the segments of $W$ to its boundary such that the endpoints of every segment in $W$ become reflex vertices in $G_{M,\ell}$. Moreover, the reflex vertices of $G_{M,\ell}$ are preserved. 

By Lemma~\ref{lemma:ReflexMatching}, there exists a $BR$-matching $W'$ of the reflex vertices of $G_{M,\ell}$ such that each segment in $W'$ is contained in $\mathcal P_{G_{M,\ell}}$. 
Notice that the endpoints of $s_j$ are re-matched in $W'$. However, since $s_j$ is not contained in $\mathcal P_{G_{M,\ell}}$, $s_j$ does not belong to $W'$. 
Moreover, Lemma~\ref{lemma:ConstructionProperties} implies that the ray, shooting downwards from $x_j$, lies outside $\mathcal P_{G_{M,\ell}}$. 
Thus, no segment in $W'$ crosses $\ell$ below $x_j$.

Let $M' = W'\cup Z_\ell$ be a set of bichromatic segments. 
Every point in $P$ is matched in $M'$ since every point in $P$ is either a reflex vertex of $G_{M,\ell}$, or an endpoint of a segment in $Z_\ell$.
Lemma~\ref{lemma:ConstructionProperties} implies that the endpoints of the segments in $Z_\ell$ are not reflex vertices in $G_{M,\ell}$. Therefore, $M'$ is a $BR$-matching of $P$. 
Since $W$ and $W'$ are compatible, $M$ and $M'$ are compatible $BR$-matchings.
\end{proof}

\section{Remarks}

Although the techniques developed in this paper appear tailored for this specific problem, they have a more general underlying scope.
At a deeper level, our tools generate a \emph{balanced} convex partition of the plane. Roughly speaking, in Lemma~\ref{lemma:AbellanasAlgorithm} a simple polygon is partitioned into a set of convex polygons obtained by shooting rays from the reflex vertices towards the interior of the polygon until hitting its boundary. Once the polygon is partitioned, a matching can be found on each convex piece.
By using Lemma~\ref{lemma:AbellanasAlgorithm} in the bichromatic setting, we generate a convex partition of the GCGs, where each convex face is in charge of matching a balanced number of red and blue points. Convex partitions, usually constructed by extending segments until they reach another segment or a previously extended section, have been extensively used to solve several augmentation and reconfiguration problems~\cite{Aichholzer2009617, CompatibleMatchingsForPSLG, Ishaque2011, Hurtado200814, Hoffmann201035}. Therefore, the techniques provided in this paper are of independent interest. In conjunction with Lemma~\ref{lemma:ReflexMatching}, operators like \textsc{Glue} and \textsc{Cut} can be used to find special convex partitions that provide new ways to construct compatible PSLGs.\\

\textbf{Acknowledgements. }
We thank the authors of~\cite{CompatibleMatchingsForPSLG} for highlighting
the open problem from their paper to the audience of the EuroGIGA meeting that took place after EuroCG 2012.
We would also like to thank Tillmann Miltzow for useful comments.

This research was done during a visit of Diane L. Souvaine to ULB supported by FNRS.

{\small
\bibliographystyle{abbrv}
\bibliography{DichromaticMatchings}
}

\end{document}